\def\Re{{\rm Re}}
\def\Im{{\rm Im}}
\def\Res{{\rm Res}}
\def\be{\begin{equation}}       \def\ee{\end{equation}}
\def\bea{\begin{eqnarray}}      \def\eea{\end{eqnarray}}
\newtheorem{theorem}{Theorem}
\newtheorem{corollary}{Corollary}
\newtheorem{lemma}{Lemma}
\newcommand{\ua}{\uparrow} 
\newcommand{\da}{\downarrow}
\newcommand{\ep}{\epsilon}
\newcommand{\bk}{\boldsymbol k}
\newcommand{\bkp}{\boldsymbol k^\prime}
\newcommand{\br}{\boldsymbol r}
\begin{document}
\title{Non-Hermitian Chiral Superfluids with a Complex Interaction}

\author{Jia-Hang Ji}
\affiliation{College of Physics, Sichuan University, Chengdu 610064, China}

\author{Wenxing Nie}
\email{wxnie@scu.edu.cn}
\affiliation{College of Physics, Sichuan University, Chengdu 610064, China}

\date{\today}

\begin{abstract}

Recently, the influence of dissipation on a quantum system has attracted much attention, particularly on how the non-Hermitian terms modify the energy spectrum, band topology, and phase transition point. Motivated by the recent investigation of non-Hermitian $s$-wave superfluidity, we study the non-Hermitian chiral $p+ip$ superfluid (SF) with a complex-valued interaction, originating from inelastic scattering between fermions. We reformulate the non-Hermitian mean-field theory for chiral SFs and derive the gap equation in the path integral approach. By numerically solving the gap equation, we obtain the phase diagram of the non-Hermitian $p+ip$ SF, characterized by the reentrant SF transition and dissipation-induced SF phase, as a result of the evolution of the exceptional lines.  The method can be extended to higher partial-wave chiral SFs, such as $d+id$ and $f+if$-wave SFs.
We further consider such a chiral $p+ip$ SF on a square lattice, to investigate the influence of dissipation on topology. We find that the non-Hermitian skin effect is absent in the specific cylinder geometry, in which the topology associated with the edge modes and Chern number is robust to dissipation. Besides, we find that the energies at the robust point nodes and line nodes are pure real. We further verify the conditions of zero winding number in (quasi-) one-dimensional systems, and prove an associated ``no-go'' theorem, which is hopefully applied to explore the geometry dependent skin effect.

\end{abstract}

\maketitle
\section{Introduction}
\label{sec: Introduction}

Chiral superfluids (SFs), whose Cooper pairs carry nonzero relative orbital angular momentum, e.g. $ L_z=\nu\hbar$~\cite {Anderson1961,Leggett1975-RMP,Volovik-book,Leggett-book}, where $\nu=1,2,3$ for chiral $p, d$, and $f$-wave SFs, are featured by time-reversal symmetry breaking~\cite{Schnyder2008PRB,Chiu2016RMP}. Such chirality is encoded in the gap function, $\hat \Delta \sim (\hat p_x + i \hat p_y)^{\nu}$, where $\nu$ is odd for a spin triplet or even for a spin singlet. With weak pairing, in the Bardeen-Cooper-Schrieffer (BCS) regime, chiral SFs support an integer topological invariant~\cite{Volovik-JETP1988}, the Chern number~\cite{TKNN}, which coincides with the Cooper pair orbital angular momentum. According to bulk-edge correspondence, the same number of edge modes as the bulk topological invariant, like Chern number, which emerges at the boundary of the system, is supposed to be insensitive to weak perturbation due to topological protection~\cite{Hatsugai1993,Halperin1982,graf2013bulk,graf2018bulk}.

Although the notion of topological superconductor/superfluid is based on a Hermitian Hamiltonian, when the system inevitably couples to the environment, the effective Hamiltonian is a non-Hermitian (NH) one~\cite{daley2014quantum,dalibard1992wave}. Recently, non-Hermitian open quantum systems have been investigated extensively~\cite{burst2022,Abbasi2022,yongchun2022,yao2019PRL,Leykam2017,Gong2018PRX,Kawabata2019PRX,Ashida2020,Bergholtz2021RMP,Yao2018PRL1D,Yao2018PRL2D,Kunst2018PRL,Lee2019PRB,Helbig2020NatPhys,Xue2020NatPhys,Titus2023PRL,Titus2024PRL, Kawabata2020higher,Shen2018PRL,YokomizoPRL2019,Yokomizo2020non,Kawabata2020higher,Yang2020PRL,Kawabata2020PRB,Fang2022NatComm,Yokomizo2023,Amoeba2024PRX}, not only due to its significant topological properties (see the reviews~\cite{Gong2018PRX,Kawabata2019PRX,Ashida2020,Bergholtz2021RMP} and references therein), but also due to the generic properties unique to non-Hermitian systems, such as non-Hermitian skin effect~\cite{Yao2018PRL1D,Yao2018PRL2D,Kunst2018PRL,Lee2019PRB,Helbig2020NatPhys,Xue2020NatPhys,Titus2023PRL,Titus2024PRL, Kawabata2020higher} and non-Bloch band theory~\cite{Shen2018PRL,Yao2018PRL1D,YokomizoPRL2019,Yokomizo2020non,Kawabata2020higher,Yang2020PRL,Kawabata2020PRB,Fang2022NatComm,Yokomizo2023,Amoeba2024PRX}. When dissipation is introduced into strongly correlated systems, more intriguing phenomena emerge~\cite{Bouganne2020NatPhys, Chen2025NatPhys, f-sum-PRL2024}.  
Recent studies on the non-Hermitian fermionic Hubbard model with a complex-valued $s$-wave contact interaction~\cite{yamamoto2019theory,Iskin2021PRA}, originating from inelastic scattering, show a unique non-Hermitian effect: reentrant SF transition and dissipation-enhanced SF phase. 
This model is featured by the singularity in the superconducting gap due to Fermi surface instability, which has been found directly related to the number of the roots of the partition function, known as Lee-Yang zeros~\cite{YangLee1,YangLee2}, of a BCS superconductor~\cite{li2023yang}.

Non-Hermitian skin effect (NHSE)~\cite{Yao2018PRL1D,Yao2018PRL2D}, is used to describe the unique phenomenon in some non-Hermitian systems that a large number of bulk modes localize at the boundary of the system under the open boundary conditions (OBC), with exponentially decayed wavefunctions away from the edge. It is significantly different from Hermitian systems, whose wavefunctions are extended Bloch waves. The emergence of NHSE also indicates that the traditional Bloch band theory is no longer valid in such systems, and the traditional bulk-boundary correspondence breaks down, so we need to resort to non-Hermitian band theory with generalized Brillouin zone~\cite{Yao2018PRL1D,Yao2018PRL2D,YokomizoPRL2019}. It is ideal to predict the presence of non-Hermitian skin modes by calculating winding number on the complex energy plane~\cite{Fang2020PRL, Okuma2020, Fang2022NatComm}.

Previous studies show some interesting phenomena in non-Hermitian (NH) superfluids~\cite{yamamoto2019theory, Iskin2021PRA,Tajima2023PRA}, NH unconventional superconductors~\cite{Das2018PRB,Kou2023arXiv,Liuwuming2022}, and NH higher-order topological superconductors~\cite{XiangJi2024,Ghosh2022PRB}. However, when the non-Hermitian pairing caused by a complex-valued interaction is introduced to chiral SFs, which belong to the topological superconductors, many interesting questions arise, such as how dissipation influences the phase transition point, how robust the edge state is by topological protection, and whether there is the skin effect or not. To address these questions, we extend the non-Hermitian pairing to chiral SFs, study the phase diagram with respect to dissipation and the topological behaviors of the system on a two-dimensional square lattice.

The rest of the paper is organized as follows. In Sec.~\ref {sec: NHMF}, we first generalize the non-Hermitian pairing to chiral SFs, derive the non-Hermitian mean-field theory for the gap equation, and self-consistently solve the gap equation to obtain the phase diagram. In Sec.~\ref {sec: topological}, by using the self-consistently solved gap order parameter, we consider a two-dimensional chiral superconductor on a square lattice, under the periodic boundary conditions, and in a cylinder geometry with an open boundary, respectively, to investigate the influence of dissipation on energy spectra, edge states, and topology. In Sec.~\ref {sec: theorem}, we present a theorem for zero winding number in (quasi-) one-dimensional systems, and a corollary that applies to the model we consider. Finally, we present the conclusions and discussions in Sec.~\ref {sec: conclusion}.

\section{Non-Hermitian mean-field theory for chiral superfluids} 
\label{sec: NHMF}

To describe chiral SFs, we consider the relevant scattering process of particle pairs ($|\bk,\ua\rangle, |-\bk,\da\rangle$) with vanishing total momentum, in the form of a reduced Hamiltonian, 
\begin{equation}
H=\sum_{\bk,\sigma}\xi_{\bk}c^\dagger_{\bk,\sigma}c_{\bk,\sigma}
-\frac{1}{N}\sum_{\bk, {\bk}^\prime} V(\bk, {\bk}^\prime) c^\dagger_{\bk,\ua}c^\dagger_{-\bk,\da}c_{-\bk^\prime,\da}c_{\bk^\prime,\ua},
\label{eq:H_bcs}
\end{equation}
where $N$ is the number of lattice sites, and $c^\dagger_{\bk,\sigma}$ ($c_{\bk,\sigma}$) creates (annihilates) one fermion with momentum $\bk$ and spin $\sigma$. The first term in Eq.~\eqref{eq:H_bcs} is the kinetic energy measured relative to the chemical potential $\mu$, where $\xi_{\bk}=\epsilon _{\bk}-\mu$ with dispersion $\ep_{\bk}$. The second term describes the two-particle momentum-dependent interaction by a scattering matrix element $V(\bk,\bkp)$, with vanishing total momentum. For $s$-wave isotropic pairing, the two-particle interaction is reduced to a structureless scattering matrix element $g$, which is the bare coupling constant.

The scattering potential $V(\bk,\bkp)$ is a only function of $|\bk-\bkp|$. At the Fermi level, $k=k'=k_F$ can be taken. Then $V(\bk,\bkp)$ only depends on $\cos \theta _{\bk \bkp} $, where $\theta _{\bk \bkp}$ is the relative angle between $\bk$ and $\bkp$ fermions. Therefore,  the potential factor $V(\bk,\bkp)$ can be expanded by the Legendre function in the approach of partial wave expansion, $V(\bk,\bkp) =\sum_{l}(2l+1)V_{l} {\rm P}_{l}(\cos\theta _{\bk \bkp})$, where $l$ is the azimuthal quantum number. After applying the spherical harmonic addition theorem ${\rm P}_{l}(\cos\theta _{\bk \bkp})=4\pi/(2l+1)\sum_{m=-l}^{l} {\rm Y}_{lm}(\theta_{\bk},\phi_{\bk}) {\rm Y}_{lm}^{*}(\theta_{{\bkp}},\phi_{{\bkp}})$, we obtain:
\begin{equation}
V(\bk,\bkp)=4\pi\sum_{l}\sum_{m=-l}^{l}V_{l} {\rm Y}_{lm}(\theta_{\bk},\phi_{\bk}) {\rm Y}_{lm}^{*}(\theta_{{\bkp}},\phi_{{\bkp}}),
\label{eq:Vkk'}
\end{equation}
where ${\rm Y}_{lm}(\theta_{\bk},\phi_{\bk}) $ is the spherical harmonic function,  and $m$ is the magnetic quantum number as $m=-l,-l+1,\cdots,l$. 

The $p$-wave pairing is encoded in the component with $l=1$, $m=-1,0,1$. To describe the equal-spin pairing (ESP) state, the so-called Anderson-Brinkman-Morel phase (A-phase) of $^3$He or the chiral $p$-wave superfluids~\cite{Anderson1961,Leggett1975-RMP}, we can take $l=1$ and $m=+1$, and obtain the effective Hamiltonian,
\begin{equation}
H_{\text{p+ip}}=\sum_{\bk,\sigma}\xi_{\bk}c^\dagger_{\bk,\sigma}c_{\bk,\sigma}
- \frac{V_1}{N} \sum_{\bk, \bkp} \Gamma_{\bk} \Gamma_{\bkp}^{*} c^\dagger_{\bk,\ua}c^\dagger_{-\bk,\da}c_{-\bkp,\da}c_{\bkp,\ua},
\label{eq:H_p}
\end{equation}
where the interaction potential is assumed to be separable and is factorized as $V(\bk,\bkp)=V_1 \Gamma_{\bk}\Gamma^*_{\bkp}$ with $\Gamma_{\bk}=\sqrt{4\pi} {\rm Y}_{1,1}(\theta_{\bk},\phi_{\bk})$, encrypting the internal pairing of Cooper pairs.  The effective Hamiltonian for higher orders of pairing can be derived in the same way by considering higher channels $(l=2,3, \cdots)$.

\subsection{Non-Hermitian BCS theory for chiral superfluids}
\label{subsec: MF}

To describe the two-body loss process in NH BCS theory, we consider the Hamiltonian in Eq.~\eqref{eq:H_p}  with a complex-valued interaction $V_{1}=U_{1}+i\gamma/2$, where $U_{1}, \gamma>0$. The imaginary part of $V_1$ denotes the dissipation strength with $\gamma$.

We can define the order parameters for the superconducting phase as,
\begin{eqnarray}
\Delta&=&-\frac{V_{1}}{N}\sum_{\bk}\Gamma^*_{\bk} {}_L  \langle c_{-\bk,\da} c_{\bk,\ua}\rangle_{R},\label{eq:Delta}\\ 
\bar{\Delta}&=&-\frac{V_{1}}{N}\sum_{\bk} \Gamma_{\bk} {}_L \langle c^{\dagger}_{\bk,\ua} c^{\dagger}_{-\bk,\da} \rangle_{R}, \label{eq:Deltabar}
\end{eqnarray}
implying $\bar{\Delta} \ne \Delta^*$ due to $|E_n\rangle_R \ne |  E_n \rangle_L$ and $V_{1}\in \mathbb{C}$.
The mean-field approximation can be applied to the NH BCS Hamiltonian of the chiral $p+ip$ superfluid, by taking the substitute $c_{-\bk,\da} c_{\bk,\ua}={}_L \langle c_{-\bk,\da} c_{\bk,\ua}\rangle_{R}+\delta$ and neglecting the second-order of fluctuation $\delta$, yielding the effective mean-field Hamiltonian for the $p+ip$ chiral superfluid,
\begin{equation}
H_{\text{MF}}=\sum_{\bk,\sigma}\xi_{\bk}c^\dagger_{\bk,\sigma}c_{\bk,\sigma}+\sum_{\bk}\Big(\Delta \Gamma_{\bk}c^\dagger_{\bk,\ua}c^\dagger_{-\bk,\da}
+\bar{\Delta} \Gamma^*_{\bk} c_{-\bk,\da}c_{\bk,\ua}\Big).
\label{eq:Hmf}
\end{equation} 
We can rewrite it in the Nambu spinor basis,
\begin{equation}
H_{\text{MF}}=\sum_{\bk} 
(c^{\dagger}_{\bk,\ua} \quad c_{-\bk,\da})
\begin{pmatrix}
\xi_{\bk} & \Delta_{\bk}\\
\bar{\Delta}_{\bk} & -\xi_{\bk} 
\end{pmatrix}
\begin{pmatrix}
c_{\bk,\ua}\\ c^{\dagger}_{-\bk,\da}
\end{pmatrix},
\end{equation}
where $\Delta_{\bk}=\Gamma_{\bk}\Delta$ and $\bar{\Delta}_{\bk}=\Gamma_{\bk}^*\bar{\Delta}$. The effective Hamiltonian can be diagonalized by the Bogoliubov transformation,
\begin{equation}
\begin{pmatrix}
\gamma_{\bk,\ua} \\ \bar{\gamma}_{-\bk,\da}
\end{pmatrix}
=\begin{pmatrix}
u_{\bk} &  -v_{\bk}\\
\bar{v}_{\bk} &  u_{\bk}
\end{pmatrix}
\begin{pmatrix}
c_{\bk,\ua} \\ c_{-\bk,\da}^{\dagger}
\end{pmatrix}
\end{equation}
where $\gamma_{\bk,\sigma}$ and $\bar{\gamma}_{\bk,\sigma}$ are the quasiparticle operators, which are not Hermitian conjugate to each other due to the non-Hermiticity of the Hamiltonian. Therefore, the Hamiltonian is diagonalized by a similarity transformation but not a unitary one.  The BCS ground states can be constructed as $|\text{BCS}\rangle_{R}=\prod_{\bk}(u_{\bk}+v_{\bk}c^{\dagger}_{\bk,\ua}c^{\dagger}_{-\bk,\da})| 0\rangle$ for the right one, and ${}_L \langle\text{BCS}|= \langle 0 |\prod_{\bk}(u_{\bk}+\bar{v}_{\bk}c_{-\bk,\da}c_{\bk,\ua})$ for the left one. Here $|0\rangle$ is the vacuum state, and the coefficients satisfy $u^2_{\bk}+v_{\bk}\bar{v}_{\bk}=1$, as imposed by the normalized condition of the inner product ${}_L\langle \text{BCS}|\text{BCS} \rangle_{R}=1$.
Here, except for node points, where the system reduced to the Hermitian case, $v_{\bk}^{*}\neq\bar{v}_{\bk}$ in general, and $u_{\bk}\in\mathbb{R}$, $ v_{\bk}, \bar{v}_{\bk} \in\mathbb{C}$ by choosing a special gauge, 
\begin{subequations}
\begin{align}
u_{\bk}&=\sqrt{\frac{E_{\bk}+\xi_{\bk}}{2E_{\bk}}},\\
v_{\bk}&=-\frac{\Delta_{\bk}}{\sqrt{2E_{\bk}(E_{\bk}+\xi_{\bk})}},\\ 
\bar{v}_{\bk}&=-\frac{\bar{\Delta}_{\bk}}{\sqrt{2E_{\bk}(E_{\bk}+\xi_{\bk})}},
\end{align}
\end{subequations}
where the quasiparticle energy $E_{\bk}=\sqrt{\xi_{\bk}^2+\Delta_{\bk}\bar{\Delta}_{\bk}}$ is complex-valued in general. 

\subsection{Gap equation and quantum phase transition of the NH chiral $p$-wave superfluids}
\label{subsec: GapEq}

The gap equation of the NH chiral $p$-wave SFs can be derived in the path integral representation, where the partition function $\mathcal{Z}=\int \mathcal{D}[\bar{c},c] e^{S(\bar{c},c)}$. We introduce auxiliary bosonic fields $\bar{\Delta}$, $\Delta$ to perform the Hubbard-Stratonovich transformation. After integrating out the fermionic degrees of freedom, we obtain the partition function as $ \mathcal{Z}=\int\mathcal{D}[\bar{\Delta}, \Delta]e^{-S_{\text{eff}}}$. The effective action is given by (see Appendix~\ref{appA} for details),
\begin{equation}
S_{\text{eff}}(\bar{\Delta},\Delta)=-\sum_{\bk,\omega_n}\ln(\omega_{n}^{2}+\xi_{\bk}^{2}+\Delta_{\bk}\bar{\Delta}_{\bk})+\frac{\beta N}{V_1}\bar{\Delta}\Delta,
\label{eq:Seff}
\end{equation}
where $\Delta_{\bk}=\Gamma_{\bk}\Delta$, $\bar{\Delta}_{\bk}=\Gamma_{\bk}^*\bar{\Delta}$, and $\omega_n=(2n+1)\pi/\beta$ is the Matsubara frequency of fermions. Instead of the inverse of the temperature of the system, the parameter $\beta$ is set to the infinite limit to formulate a path integral, for obtaining an effective ground state. Applying the saddle point condition for the effective action, $\partial S_{\text{eff}}/\partial \Delta= \partial S_{\text{eff}}/\partial \bar{\Delta}=0$, and summing over Matsubara frequencies, we obtain the gap equation for the NH chiral $p$-wave SFs (see Appendix~\ref{appA} for details),
\begin{equation}
\sum_{\bk}\frac{|\Gamma_{\bk}|^2}{2E_{\bk}}\tanh \frac{\beta E_{\bk}}{2}=\frac{N}{V_{1}},
\label{eq:gap_eq}
\end{equation}
where $E_{\bk}=\sqrt{\xi_{\bk}^{2}+\Delta_{\bk}\bar{\Delta}_{\bk}}$. For the $s$-wave pairing SF, after taking $l=0$ in Eq.~\eqref{eq:Vkk'}, the gap equation is reduced to $\sum_{\bk}\tanh\big(\beta\sqrt{\xi^{2}_{\bk}+\bar{\Delta}\Delta}/2\big)/\big(2\sqrt{\xi^{2}_{\bk}+\bar{\Delta}\Delta}\big)=N/V_1$, which is the NH gap equation for the $s$-wave SF~\cite{yamamoto2019theory}.

We can set $\Delta=\bar{\Delta}\in\mathbb{C} $ by choosing a special gauge and numerically solving the self-consistent gap equation in Eq.~\eqref{eq:gap_eq}, by taking $\beta\to\infty$. The self-consistent solutions of Eq.~\eqref{eq:gap_eq} are presented in Fig.~\ref{fig:fig1}, with typical values of $U_1/t$. We consider here a square lattice with energy dispersion $\epsilon_{\bk}=-2t(\cos k_x+\cos k_y)$, where $t$ is the hopping amplitude, and the chemical potential is taken as $\mu/t=-1$. 
For the Hermitian chiral $p$-wave superfluids, the Chern number $C=-1$ when $-4<\mu/t<0$ on square lattice~\cite{bühler2014majorana,Huangwen2015}. Although for a non-Hermitian chiral superfluid, the topological phase transition point may shift and need to be reexamined in terms of the generalized Brillouin zone (GBZ), we can largely focus on the chemical potential relatively far away from the transition point, where the skin effect is expected to be neglected. Specially, the skin effect is exactly absent in the geometry we consider next, as the investigation, which is presented in Sec .~\ref {sec: theorem}, shows.
Therefore, we focus on the discussion of the case with $\mu/t=-1$ in the following. 

\begin{figure}[ht]
\centering
\includegraphics[width=8.5cm]{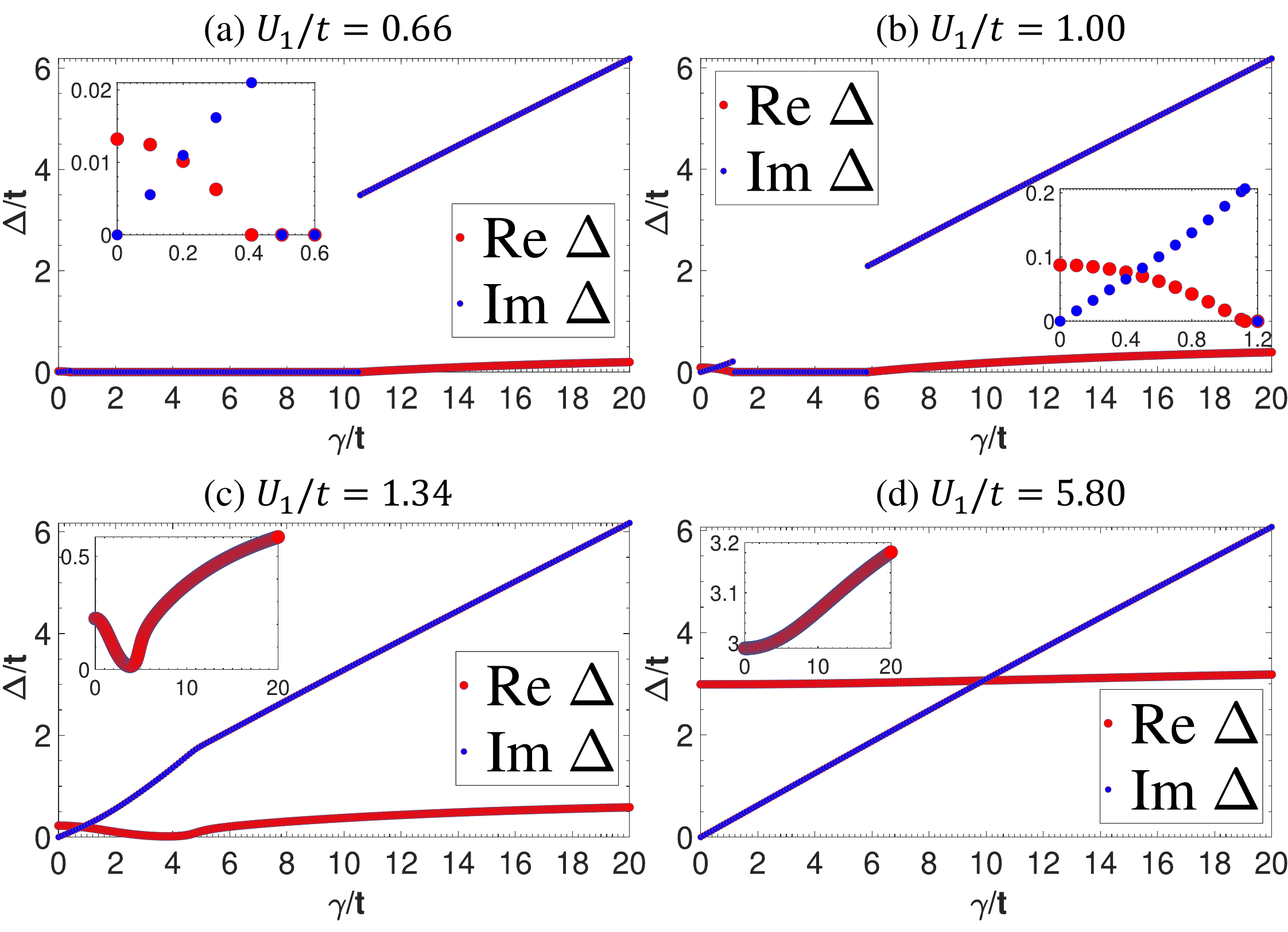}
\caption{The real part and imaginary part of the superfluid order parameter $\Delta$ as a function of $\gamma/t$, by numerically solved gap equation in Eq.~\eqref{eq:gap_eq}. Figures (a), (b), (c), and (d) show the solutions of $\Delta$ when $\mu/t=-1$, with $U_{1}/t=0.66, 1.00, 1.34$, and $5.80$, respectively, where the lattice size $N=1000\times1000$. }
\label{fig:fig1}
\end{figure}

From Fig.~\ref{fig:fig1} (a) and (b), we find that at weak pairing with $U_1/t=0.66$ and $U_1/t=1.00$, when the dissipation strength $\gamma$ increases, the nontrivial solution ${\rm Re} \Delta$ decreases first, then to zero, and increases later on, indicating the superfluid phase is destroyed first, and then is enhanced by dissipation. This two-fold role of dissipation is also observed in the non-Hermitian $s$-wave pairing SF. 
The insect of Fig.~\ref{fig:fig1}(a) includes two pure imaginary solutions of the gap equation at $\gamma/t=0.4079$ and $\gamma/t=10.5653$, and the insect of Fig.~\ref{fig:fig1}(b) includes two pure imaginary solutions at $\gamma/t=1.1227$ and $\gamma/t=5.8539$. These pure imaginary solutions, as critical points, locate at the phase transition boundary, which will be explained later in Fig.~\ref{fig:fig2}.
By comparing Fig.~\ref{fig:fig1}(a) and (b), we find the window for trivial zero solutions shrinks with increasing $U_1/t$, which is more obviously shown in Fig.~\ref{fig:fig2}. The critical point where the zero solution disappears is shown in Fig.~\ref{fig:fig1} (c) with intermediate pairing strength around $U_1/t=1.34$. At the strong pairing regime as shown in Fig.~\ref{fig:fig1} (d) with $U_1/t=5.80$, the superfluid order parameters ${\rm Re} \Delta$ and ${\rm Im} \Delta$ are both enhanced by any dissipation, instead of being suppressed first. 

To exclude the nontrivial solutions given by the possible local minimum of the real part of energy, we further compare the real parts of the energies of superfluid states and normal states by calculating the condensation energy, which is defined as the energy difference between the superfluid phase and normal phase. The condensation energy is derived as $E_{\text{c}}=N\bar{\Delta}\Delta/V_1-\sum_{\bk}(E_{\bk}-\lvert \xi_{\bk} \rvert)$. The details of the derivation are presented in Appendix~\ref{appB}.

\begin{figure}
\centering
\includegraphics[width=8.cm]{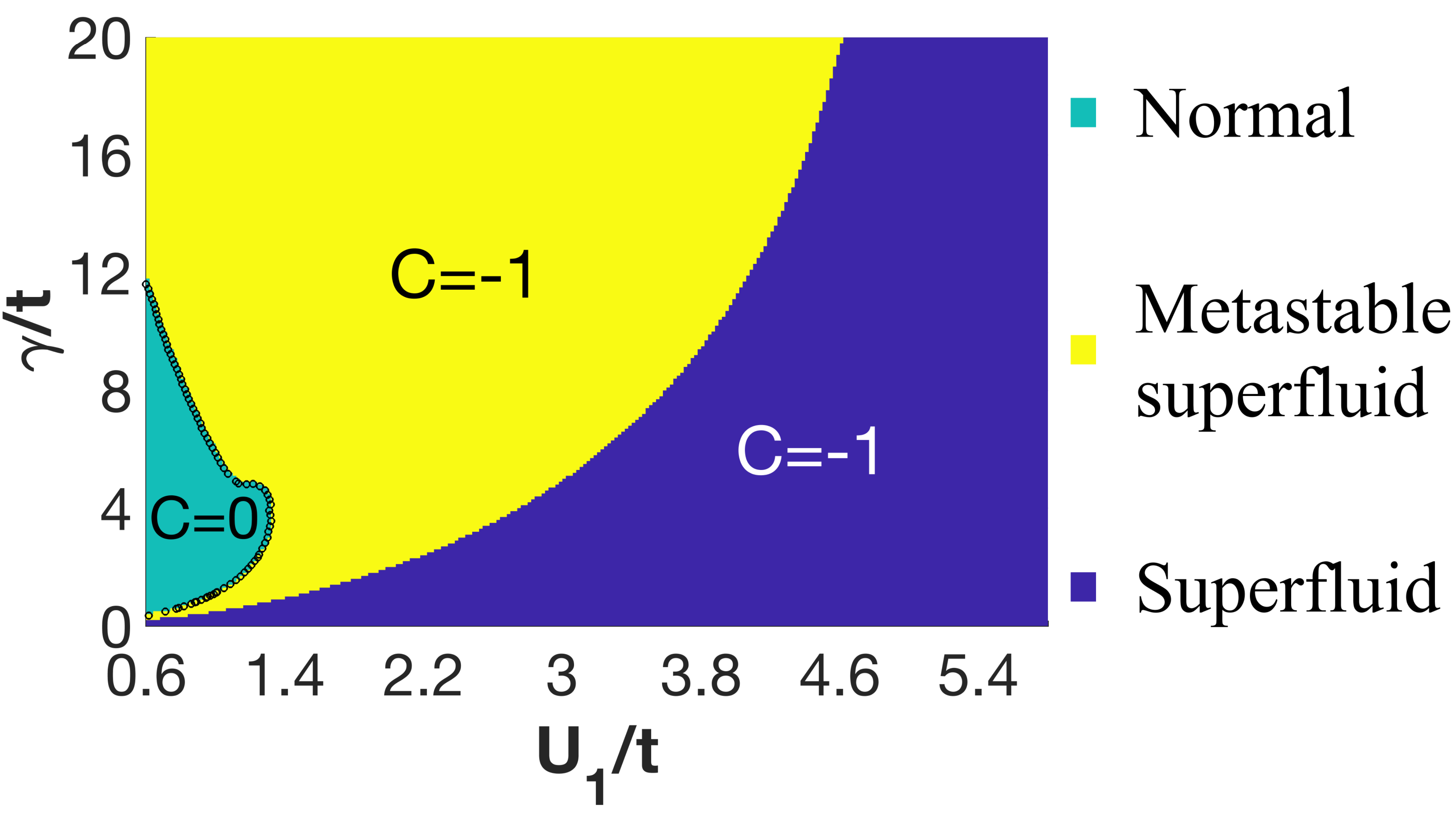}
\caption{Phase diagram of non-Hermitian $p+ip$ BCS model by self-consistently solved gap equation in Eq.~\eqref{eq:gap_eq} with $\mu/t=-1$, in the limit $\beta\to\infty$. The cyan region represents the normal phase without any convergent solution of the gap equation. The yellow region represents the metastable superfluid phase, with nonzero solutions and positive $\Re E_{\text c}$.  The purple region represents the stable superfluid phase, with nonzero solutions and negative $\Re E_{\text c}$. The values of the Chern number are labeled in these three phases. The black empty circles show the phase boundary between the normal phase and metastable superfluid phase, which is given by pure imaginary trial solutions of the gap equation.}
\label{fig:fig2}   
\end{figure}

After iterating Eq.~\eqref{eq:gap_eq} self-consistently for solutions of $\Re\Delta$ and $\Im\Delta$, and calculating the condensation energy, we construct the phase diagram of the BCS Hamiltonian with different values of $U_1/t$ and $\gamma/t$, as shown in Fig.~\ref{fig:fig2}. Following previous criteria, the diagram is characterized by three phases. The cyan region shows the ``normal phase'', in which the numerical calculation can not converge to any nonzero solution of $\Re\Delta$ or $\Im\Delta$. In the normal phase, the condensation energy is zero, as expected. The yellow and the purple regions correspond to the superfluid phase, in which the numerical calculation arrives at convergent solutions with finite $\Re\Delta$ and $\Im\Delta$ with reliable precision. The nonzero superconducting order parameter suggests a superfluid phase in such two regions. However, the sign of the real part of condensation energy $\Re E_\text{c}$ is positive (negative) in the yellow (purple) region. The positive $\Re E_\text{c}$ indicates a ``metastable superfluid phase'', attributed to the contribution of the non-Hermitian effect introduced by pairing. While the negative $\Re E_\text{c}$ indicates a ``stable superfluid phase'' as shown in the purple region, which is the effective ground state of the non-Hermitian BCS Hamiltonian. In the case $\gamma=0$, it is our familiar superfluid phase in the Hermitian case. 
The phase diagram is similar to that of the NH $s$-wave superfluid, but one can verify that any pure imaginary energy gap is a solution of the gap equation, which is different from the $s$-wave case, whose imaginary part has a upper bound at the phase transition boundary between normal phase and metastable superfluid phase. 

\begin{figure}
\centering     
\includegraphics[width=8.5cm]{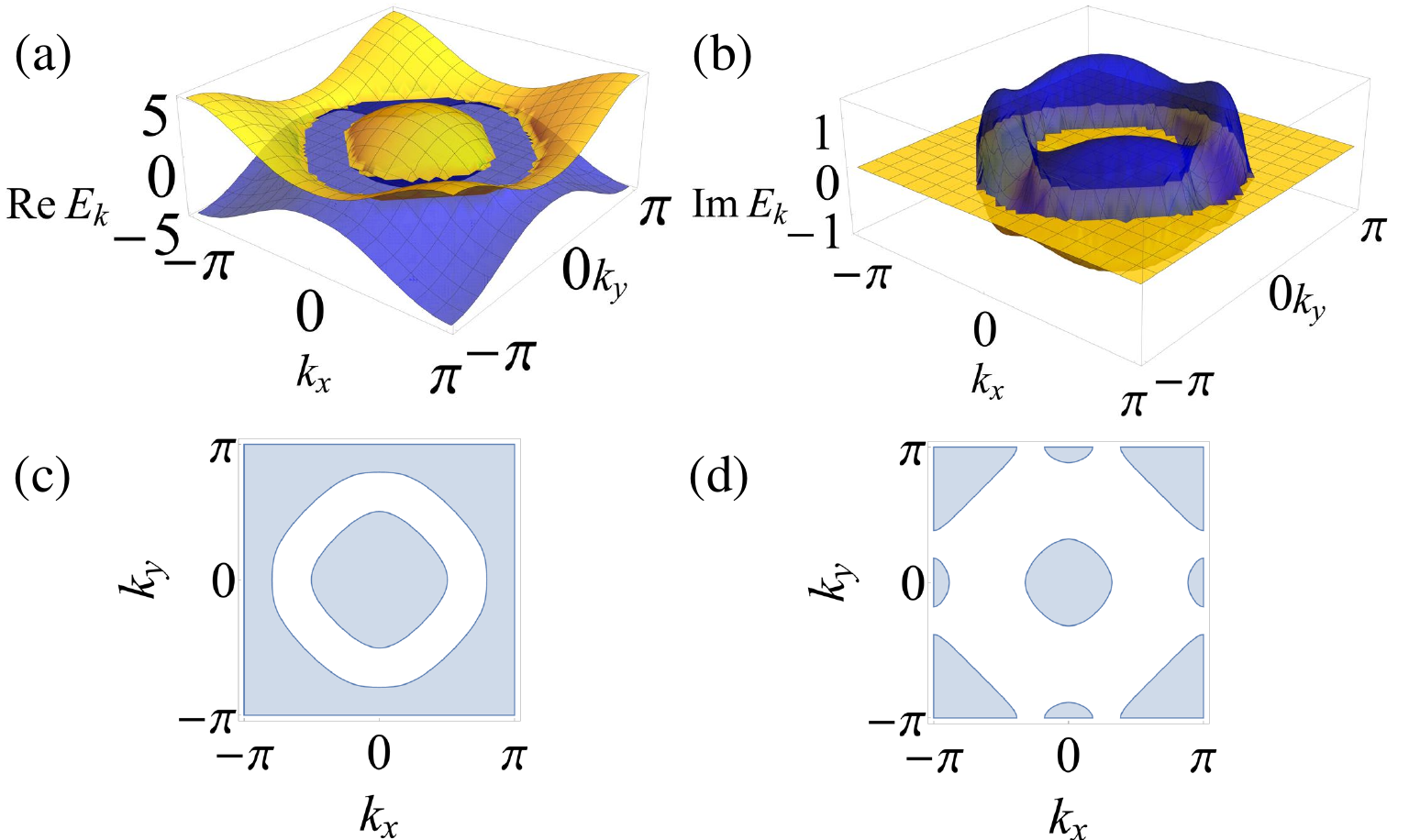}
\caption{(a) The real part and (b) imaginary part of quasiparticle energy spectra of the NH chiral $p$-wave SFs, $E_{\bk}=\sqrt{\xi_{\bk}^2+\Delta_{\bk}\bar{\Delta}_{\bk}}$, with $\mu/t=-1$, at critical point with $\Delta=0.8i$, with $\xi_{\bk}=-2t(\cos k_x+\cos k_y)-\mu$ and $\Delta_{\bk}=\Gamma_{\bk}\Delta$. The exception lines and two areas divided by them, when  $\mu/t=-1$,  with $\Delta=0.8i$ in (c), and $\Delta=2.0i$ in (d). Pure real energy states are labeled by the blue area, while pure imaginary energy states are labeled by the white area, indicating that the $CP$ symmetry spontaneously breaks at EPs.}
\label{fig:fig3}   
\end{figure}

We further investigate the phase boundary between the normal phase and the metastable superfluid phase, and find that it is a signature of the exceptional points of the non-Hermitian Hamiltonian. To show this, we take pure imaginary trial solution $\Delta=i \Delta_I, (\Delta_I\in\mathbb R)$, into the gap equation in Eq.~\eqref{eq:gap_eq}, and find corresponding $V_1$, whose real part $U_1$ and imaginary part $\gamma/2$ are labeled by black empty circles in Fig.~\ref{fig:fig2}. We find that these circles locate at the phase boundary between the normal phase and the metastable superfluid phase, which indicates that the pure imaginary solutions yield the phase boundary, which is also consistent with the analytical solution, as shown in the Appendix~\ref{appC}.

As one example with $\Delta=0.8i$ at the critical point, Fig.~\ref{fig:fig3} (a) and Fig.~\ref{fig:fig3} (b) show the real part and imaginary part of quasi-particle energy spectra in two dimensions, respectively, where the degenerate states known as exceptional points (EPs) are shown as dark blue solid lines in Fig.~\ref{fig:fig3} (c), accompanied by two areas divided by them. The blue area corresponds to the energy spectrum with pure real $E_{\bk}$, and the white area corresponds to the energy spectrum with pure imaginary  $E_{\bk}$. We also demonstrate the EPs with $\Delta=2i$ in Fig.~\ref{fig:fig3} (d), which shows the pure real area decreases as the absolute value of the pure imaginary gap increases. However, it is worth noting that the pure real area will not disappear because of the anisotropic nature of $p$-wave superconductivity, which is different from the behavior in the $s$-wave model.

We further verify that the parity-particle-hole ($CP$) symmetry~\cite{Okugawa2019,yamamoto2019theory,Budich2019,kawabata2019CP} holds in $H_{\text{MF}}(\bk)$ with pure imaginary pairing $\Delta$, $CP H_{\text{MF}}(\bk) (CP)^{-1}=-H_{\text{MF}}(\bk)$, where $CP=\sigma_x K$, $\sigma_x$ is Pauli matrix, and $K$ is complex conjugation. When the eigenstates degenerate, the $CP$ symmetry of $H_{\text{MF}}(\bk)$ spontaneously breaks at EPs in momentum space. Therefore, the phase transition between the normal phase and the metastable superfluid phase is completely induced by the non-Hermitian effect, accompanied by the emergence of EPs. 

\section{Topological behavior of the system}
\label{sec: topological}
Different from the $s$-wave SF, which is topologically trivial, the nontrivial topology of the chiral $p$-wave SFs is encoded in its order parameter. After introducing the non-Hermitian pairing, it turns into an interesting platform to investigate whether and how dissipation induced by the non-Hermitian pairing influences the topological properties of this system. 
In the following, we consider a two-dimensional chiral SF on a square lattice with the non-Hermitian pairing,
\begin{equation}
\begin{aligned}
H&=-t'\sum_{\br,\sigma}\Big(c^\dagger_{\br,\sigma}c_{\br \pm \hat{x},\sigma}+c^\dagger_{\br,\sigma}c_{\br\pm\hat{y},\sigma}\Big)-\mu '\sum_{\br,\sigma}c^\dagger_{\br,\sigma}c_{\br,\sigma}\\
&+\frac{g}{2}\sum_{\br,\zeta=\pm 1}\Big(i\zeta\Delta c^\dagger_{\br,\ua}c^\dagger_{\br+\zeta \hat{x},\da}-\zeta\Delta c^\dagger_{\br,\ua}c^\dagger_{\br+\zeta\hat{y},\da}\\
&-i\zeta\bar{\Delta}c_{\br+\zeta\hat{x},\da}c_{\br,\ua}-\zeta\bar{\Delta} c_{\br+\zeta\hat{y},\da}c_{\br,\ua}\Big),
\end{aligned}
\label{eq:H_lattice}
\end{equation}
where the chemical potential $\mu '$ is real, and $\bar{\Delta} \ne \Delta^*$, implying the non-Hermitian pairing. We let $t'=2t$, $\mu '=2\mu$, and $g=\sqrt{3/2}$ to guarantee the values of $\Delta$ and $\bar{\Delta}$ are the same as those in Eq.~\eqref{eq:Hmf}, so that we can use the numerical solutions of order parameters obtained by solving the gap equation in Eq.~\eqref{eq:gap_eq}.

\begin{figure}[hb]
\centering
\includegraphics[width=8.5cm]{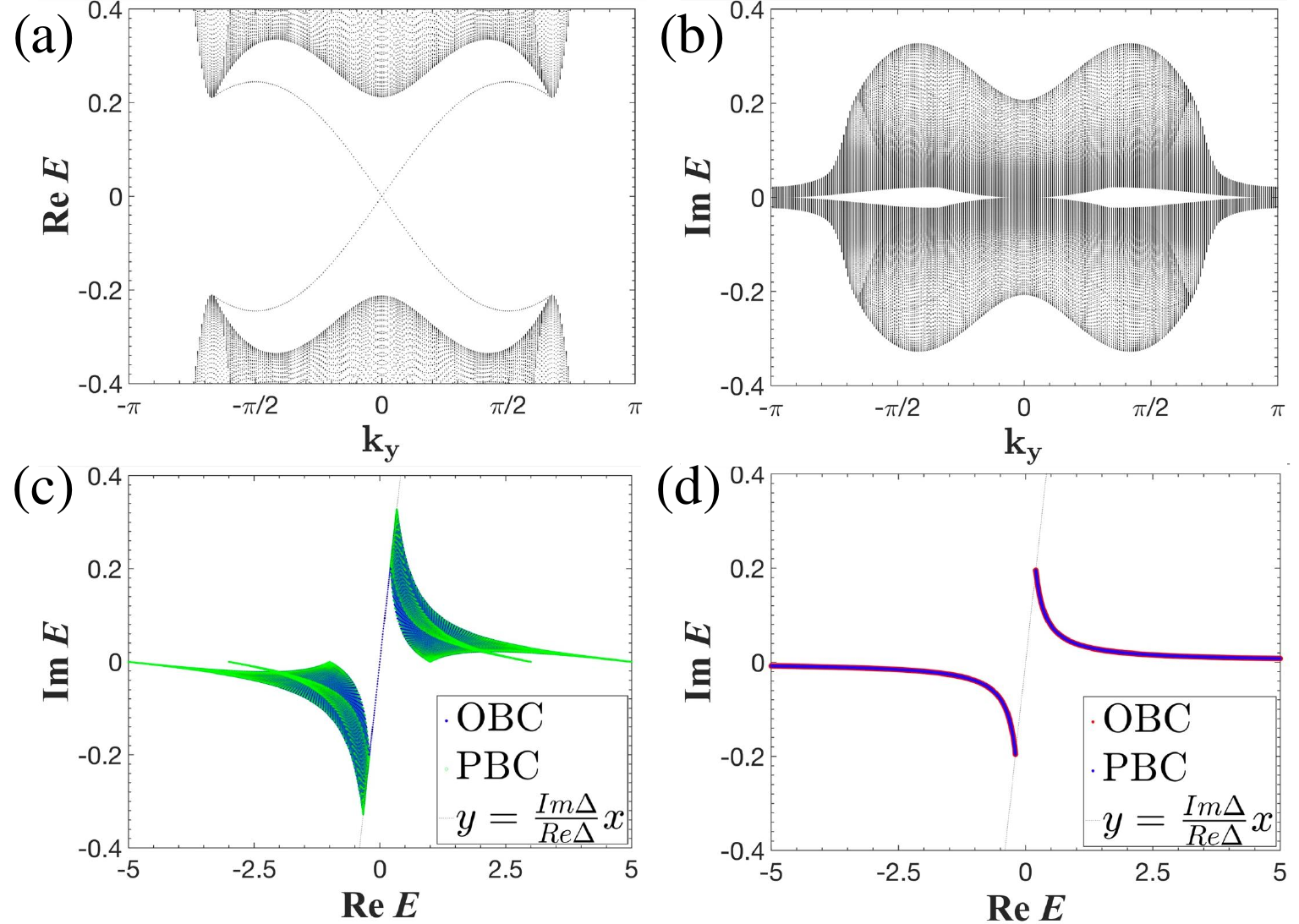}
\caption{The low energy dispersions of the 2D non-Hermitian chiral $p$-wave superfluids on a square lattice in the cylinder geometry with  periodic boundary conditions are shown in (a) and (b). The complex spectra of this system with PBC and in cylinder geometry with OBC are shown in (c), where the chiral edge state shown by blue dots can be linearly fitted by the function $\Im E=(\Im \Delta/ \Re \Delta)\Re E$.  Parameter in Fig.~(a)-(c): $\Delta/t=0.1998+0.1954i$, one stable superfluid solution of the gap equation, when $V_1/t=1.34+0.4i$. The cylinder height employed is $L_x=250$. (d) A schematic plot of the complex energy spectrum of the $s$-wave SF, where the area of the spectrum is zero, i.e. forms arcs on the complex plane.}
\label{fig:fig4}   
\end{figure}

With periodic boundary conditions (PBC) in both $x$ and $y$ directions, the Hamiltonian in Eq.~\eqref{eq:H_lattice} can be written in Bogoliubov-de Gennes (BdG) form as $H=\sum_{\bk} \psi_{\bk}^{\dag} H_{\bk} \psi_{\bk}$ in momentum space after the Fourier transformation, where $\psi_{\bk}$ is a Nambu spinor that $\psi_{\bk}=(c_{\bk,\ua},c_{-\bk,\da}^{\dag})^{T}$. Here, $H_{\bk}$ is a Bloch Hamiltonian,  
\begin{equation}
H_{\bk}=
\begin{pmatrix}
\xi_{\bk} & \Delta_{\bk}\\
\bar{\Delta}_{\bk} & -\xi_{\bk} \\
\end{pmatrix}
\label{eq:BlochH}
\end{equation}
where $\xi_{\bk}=-2t(\cos{k_x}+\cos{k_y})-\mu$, $\Delta_{\bk}=\sqrt{3/2} \Delta (\sin{k_x}+i \sin{k_y})$, $\bar{\Delta}_{\bk}=\sqrt{3/2} \bar{\Delta}(\sin{k_x}-i \sin{k_y})$. 
We also set $\Delta=\bar{\Delta}\in\mathbb{C} $ by choosing a special gauge. It is easy to check that this Bloch Hamiltonian satisfies the particle-hole symmetry (PHS): $PH_{\bk}^TP^{-1}=-H(-\bk)$, where $P=\sigma_x$ is one Pauli matrix. Hence, the non-Hermitian $p+ip$-wave SF belongs to the class D by 38-fold symmetry classification of non-Hermitian systems~\cite{Kawabata2019PRX}.

Figure~\ref{fig:fig4} shows the energy spectra of this system in cylinder geometry under the open boundary conditions (OBC) in $x$ direction, and in a torus geometry with PBC, respectively. 
We mainly focus on the discussion with $\mu/t=-1$, so that we can use the numerical solutions of the gap equation, obtained in the phase diagram as shown in Fig.~\ref{fig:fig2}. Figures~\ref{fig:fig4} (a) and (b) show the real part and the imaginary part of the low energy dispersion with OBC, respectively, when $\Delta/t=0.1998+0.195i$, which is a stable superfluid phase solution of the gap equation at weak interaction $V_1/t=1.34+0.4i$. As stated in the Ref.~{\cite{Kawabata2019PRX}, the chiral edge states are robust in the presence of a line gap, for non-Hermitian topological superconductors belonging to class D, due to the $\mathbb{Z}$ topology, and this is also the characteristic of the system that we are considering. Hence, the edge state is still robust to dissipation and well localized near the boundary of the cylinder, as shown in Fig.~\ref{fig:fig5} (b), although the value of the imaginary part of $\Delta$ is comparable with that of the real part. In the OBC spectrum on the complex energy plane as shown in Fig.~\ref{fig:fig4} (c), we can further fit the edge modes by a linear function $\Im E=(\Im \Delta/ \Re \Delta)\Re E$, whose slope is consistent with Ref.~\cite{Kou2023arXiv} when considering complex Fermi velocity. We have also rigorously proved that each PBC complex energy satisfies $\Im E/\Re E\le \Im \Delta/\Re \Delta$, and explained the folding-like shape of energy spectrum caused by this relation, as presented in Appendix~\ref{appD}. Namely, we have proved one upper bound of the ratio $\Im E/ \Re E$ in the PBC energy spectrum. 

Whereas, we still cannot ensure the robustness of the traditional bulk-edge correspondence, so we need to know whether the skin effect exists in the system or not. Different from the NH $s$-wave SF as illustrated in Fig.~\ref {fig:fig4} (d),  in which the spectral area is zero, where the skin effect can be proved to be absent~\cite{Fang2020PRL}, the PBC spectrum of the chiral $p$-wave superfluids in Fig.~\ref{fig:fig4} (c) covers a finite area on the complex plane, indicating that the universal skin effect can exist in a generic two-dimensional geometry with fully open boundaries~\cite{Fang2022NatComm}. However, we note that, except for the edge modes that appear as a straight line with OBC, the complex spectrum with OBC and the one with PBC totally overlap on this complex plane, implying that the possible relation ``GBZ=BZ", the condition of the absence of skin effect in the stripe geometry. 
In fact, in the specific geometry, as a cylinder with open boundary in $x$- or $y$-direction, we indeed prove the zero winding number for the NH chiral $p$-wave SFs by presenting Theorem~\ref{thm.theorem} and Corollary~\ref{corollary} in the next section, and the zero winding number is equivalent to the absence of the skin effect for these models (including the one considered here) shown in Appendix~\ref{appE}. Furthermore, this result also implies the absence of the skin effect for the NH chiral $p$-wave SFs in the square geometry under fully OBC, according to the geometry-dependent skin effect (GDSE)~\cite{Fang2022NatComm,zhou2023GDSE}. Therefore, we make the conclusion that the skin effect does not appear in this system, as verified in Fig.~\ref{fig:fig5}, where the distribution of wavefunctions of edge modes and bulk modes for an arbitrary $k_y$ was investigated. 

\begin{figure}[hb]
\centering
\includegraphics[width=7.5cm]{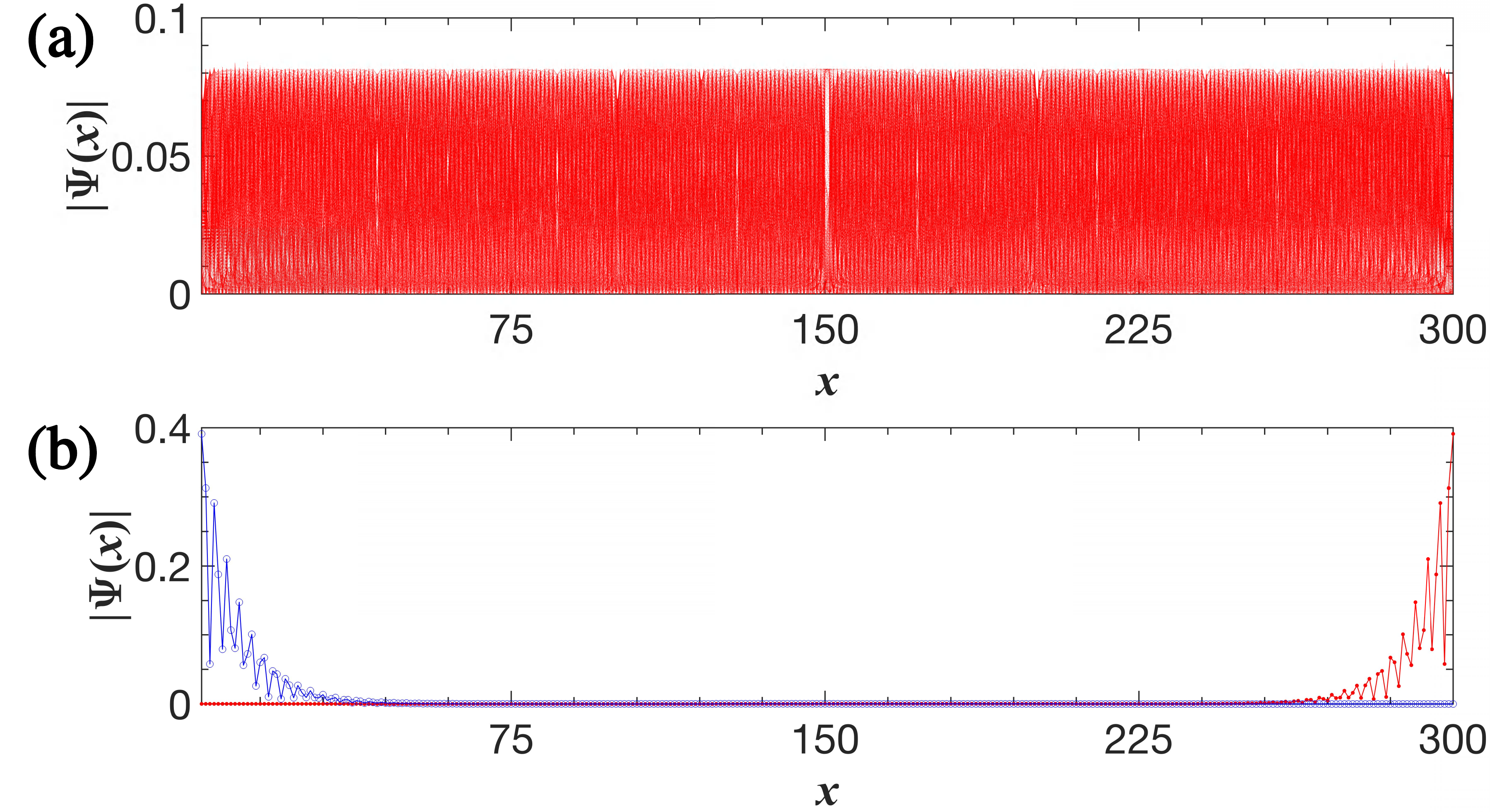}
\caption{The wavefunctions of all bulk modes in (a), and two edge modes characterized by different colors in (b), with $\mu/t=-1, U/t=1.34+0.4i$, and $\Delta=0.1998+0.1954i$. Here we fix an arbitrary $k_y=\pi/10$, and take system size $N=300$. }
\label{fig:fig5}   
\end{figure}

Another notable feature of the energy spectrum is the robustness of point nodes in the $p$-wave case. For Hermitian superconductivity, gap nodes can be defined as the $k$ points in the BZ that satisfy $|\Delta_k|^2=|\Delta \Gamma_k|^2=0$~\cite{sigrist2005introduction}. There is no node in the $s$-wave superconductor, because of its isotropic nature. However, for anisotropic unconventional superconductors, for example in the $p+ip$ chiral superfluid, there are two point nodes on the Fermi surface, one in the north pole and the other in the south pole~\cite{sigrist2005introduction}. The node corresponds to the gapless region in the Bogoliubov quasi-particle energy spectrum~\cite{ando2015topological}. For the NH $p$-wave superfluids, we find that the energy spectrum contains some points where the imaginary part of the energy is zero. We call this phenomenon as robustness of point nodes. The reason is explained in the following.

For a general non-Hermitian superconductor, the quasi-particle energy $E_k=\sqrt{\xi_k^2+\Delta^2|{\Gamma_k}^2|}$, where $\Delta\equiv(a+ib)\in \mathbb{C}$. So we have
\begin{equation}
\Re(E_k)\Im(E_k)=ab|\Gamma_k^2|.
\end{equation}
Suppose $a \ne 0$ and $b \ne 0$, it is obvious that if $\Im (E_k)=0$ at certain $k$ point, then $|\Gamma_k|^2=0$ there, implying the existence of nodes as such $k$ point. On the other hand, if $|\Gamma_k^2|=0$ at certain $k$ point, then we have $E_k=\sqrt{\xi_k^2}\in \mathbb{R}$, or $\Im (E_k)=0$ at nodes. Hence, we have the statement ``$\Im(E_k)=0 \Longleftrightarrow$ nodes in non-Hermitian superconductivity''. This statement shows that the dissipation will not change the value of $E_k$  at the nodes, as the BdG Hamiltonian changes from Hermitian one to a non-Hermitian one by a complex-valued interaction. This observation is not limited to point nodes. We also show the non-Hermitian $d_{x^2-y^2}$ superconductivity with line node as another example in the Appendix~\ref{appD}. Therefore, we can use dissipation as a probe to detect the type of nodes by measuring the energy spectrum.

\section{A theorem for the trivial winding number}
\label{sec: theorem}

The condition for non-Hermitian skin effect (NHSE) is an important issue.  We can use the generalized Brillouin zone (GBZ) condition and solve the characteristic equation~\cite{Yao2018PRL1D,YokomizoPRL2019}, or use a more general and accurate method, auxiliary GBZ~\cite{Yang2020PRL}, to predict its presence. These methods usually involve sophisticated calculations, especially in dimensions higher than one~\cite{Amoeba2024PRX}. 

It has been found that NHSE is related to intrinsic non-Hermitian topology~\cite{Fang2020PRL,Okuma2020,HU2025NHSE}. In particular, in (quasi-) one dimension systems, one can define a non-Hermitian winding number as a topological invariant~\cite{Gong2018PRX,Kawabata2019PRX,Fang2020PRL,Okuma2020}. 
The winding number of energy on the complex plane concerning a reference energy $E_0$ of a (quasi-) one-dimensional (1D) non-Hermitian model is defined as~\cite{Fang2020PRL,Gong2018PRX}:
\begin{equation}
w_{\mathcal C, E_0}:=\frac{1}{2\pi}\oint_{ \mathcal C} \frac{{\rm d}}{{\rm d}z}\arg[H(z)-E_0] {\rm d}z,
\label{eq:WN}
\end{equation}
where $\mathcal C$ is an arbitrary oriented loop, and $E_0\in\mathbb{C}$ represents an arbitrary reference point on the complex energy plane. If we choose $\mathcal C=\text{BZ}$ specially, 
Eq.~\eqref{eq:WN} describes the winding phase of $H(z)-E_0$ along BZ, which is a topological invariant unique in non-Hermitian systems~\cite{Gong2018PRX,Kawabata2019PRX,Shen2018PRL,Bergholtz2021RMP}.
We can let $z:=e^{ik}$, where the Brillouin zone (BZ) is mapped into a unit circle $|z|=1$, and the function $H(z)$ is the corresponding image of the Bloch Hamiltonian $H(k)$ via this mapping. The winding number reflects whether the image of the unit circle under $H(z)$ is an arc with zero interior or a loop with nonzero area enclosed with PBC.
 It can be rewritten as an integral in momentum space:
\begin{equation}
w_{\text{BZ}, E_{0}}=\frac{1}{2\pi i}\int_{0}^{2\pi}\frac{{\rm d}}{{\rm d}k}\ln\Big[\det[H(k)-E_{0}\mathbb{I}]\Big]{\rm d}k.
\label{eq:windingBZ}
\end{equation}

It has been theoretically established the exact correspondence between the emergence of non-Hermitian skin modes with open boundary condition, and the nonzero winding numbers $w_{\text{BZ}, E_{0}}$ on the complex energy plane as momentum $k$ traverses BZ with PBC, for (quasi-) one-dimensional one-band systems~\cite{Fang2020PRL,Borgnia2020,Okuma2020}. 
If the image of the unit circle $H(z)$ with PBC is loop-shaped, NHSE is proved to be present. On the contrary, if the image of the unit circle only forms an arc without interior, NHSE is proved to be absent~\cite{Fang2020PRL}.
As the winding number represents the number of times that the energy spectrum wraps around $E_0$ with OBC, once we get a nonzero winding number $w_{\text{BZ}, E_{0}}$ around a  certain $E_0$ inside the PBC curve, the spectrum must be a loop, implying the GBZ must deviate from BZ, since $w_{\text{GBZ}, E_{0}}=0$ always holds~\cite{Fang2020PRL}. Hence, evaluating winding number along BZ is important and useful to predict the skin effect in (quasi-) one-dimensional systems.

It would be very convenient if we could find a simple criterion to determine whether the winding number is zero, rather than specific calculations for it. The winding number of a class of Hamiltonians with certain symmetries~\cite{Gong2018PRX,Liu2019,Yi2022} or for disordered system~\cite{Claes2021}, based on the analysis of Hamiltonians, has been revealed by some previous studies. 
In the following, we will provide a ``no-go” theorem as a criterion by considering the algebraic form of a Hamiltonian, which reflects the mathematical structure behind the winding number. 

\begin{theorem}
For an $E_0\in\mathbb{C}$, if the characteristic polynomial $F(k)\equiv \det[H(k)-E_{0}\mathbb{I}]=\sum_{l,m,n,a,b} c_{l}\cos^m(ak)\sin^n(bk)$  is an even function of $k$, then we have $w_{\text{BZ}, E_0}=0$, where $l,m,n,a,b\in\mathbb{N}$ and $c_{l}\in\mathbb{C}$.
\label{thm.theorem}
\end{theorem}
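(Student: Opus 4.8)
The plan is to work directly with the momentum-space representation of the invariant, $w_{\text{BZ},E_0}=\frac{1}{2\pi i}\int_0^{2\pi}\frac{F'(k)}{F(k)}\,dk$ with $F(k)=\det[H(k)-E_0\mathbb{I}]$, and to turn the evenness hypothesis into a reflection symmetry of this contour integral. First I would note that every monomial $\cos^m(ak)\sin^n(bk)$ with $a,b\in\mathbb{N}$ is $2\pi$-periodic, so $F(k+2\pi)=F(k)$; combining this with the assumed evenness $F(-k)=F(k)$ yields the reflection identity $F(2\pi-k)=F(k)$. Geometrically this already tells the whole story: as $k$ sweeps $[0,2\pi]$, the image $F(k)$ traces some curve on $[0,\pi]$ and then retraces exactly the same curve in reverse on $[\pi,2\pi]$, and a path that doubles back on itself encloses zero signed area and cannot wind around any reference point $E_0$ lying off it.

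To make this precise I would split the integral at $k=\pi$ and substitute $k=2\pi-k'$ in the piece over $[\pi,2\pi]$. Differentiating the reflection identity gives $F'(2\pi-k')=-F'(k')$, so the integrand picks up two sign changes—one from $F'$ and one from $dk=-dk'$—while the orientation of the limits flips, and the $[\pi,2\pi]$ contribution collapses to $-\int_0^{\pi}\frac{F'(k')}{F(k')}\,dk'$. This is the exact negative of the $[0,\pi]$ contribution, so the two cancel and $w_{\text{BZ},E_0}=0$. An equivalent and perhaps more transparent route tracks the continuous argument $\theta(k)=\arg F(k)$: the reflection identity forces $\theta'(\pi+s)=-\theta'(\pi-s)$, hence $\theta(\pi+s)=\theta(\pi-s)$ for all $s$, and at $s=\pi$ this gives $\theta(2\pi)=\theta(0)$, i.e.\ zero net phase accumulation around the loop.

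I do not anticipate a genuine obstacle, since the argument is ultimately a one-line reflection symmetry of the integral; the stated algebraic form $\sum_{l,m,n,a,b}c_l\cos^m(ak)\sin^n(bk)$ serves mainly to guarantee $2\pi$-periodicity (via $a,b\in\mathbb{N}$) and to make evenness a condition one can verify by inspection, namely that all odd-$n$ terms cancel. The only hypotheses beyond evenness are that $F$ be smooth and nonvanishing on the Brillouin zone, i.e.\ that $E_0$ avoids the PBC spectrum; both are automatic once the winding number is well defined, the former because a trigonometric polynomial is entire. The single point requiring care is the sign bookkeeping in the substitution, together with checking that $k=0$ and $k=\pi$ are genuinely the fixed points of $k\mapsto 2\pi-k$ modulo $2\pi$, so that the traced curve turns around at those parameter values rather than crossing itself transversally.
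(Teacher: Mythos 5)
Your argument is correct for every reference energy $E_0$ at which the integral is an ordinary (nonsingular) one, and it reaches the conclusion by a genuinely different and more elementary route than the paper. The paper substitutes $z=e^{ik}$, shows that evenness forces $F(k)$ into the form $P_{2\alpha}(z)/z^{\alpha}$ with $P_{2\alpha}$ a palindromic polynomial, invokes the reciprocal pairing of the roots of a palindromic polynomial ($\beta$ and $1/\beta$ appear with equal multiplicity), and then evaluates the logarithmic-derivative contour integral by the residue theorem, counting zeros inside, on, and outside the unit circle against the pole of order $\alpha$ at $z=0$. Your reflection identity $F(2\pi-k)=F(k)$ is precisely the real-line shadow of the paper's relation $f(z)=f(1/z)$ (the map $k\mapsto 2\pi-k$ is $z\mapsto 1/z$ restricted to $|z|=1$), but you exploit it directly as a cancellation of the two halves of the integral, bypassing residues and root-counting entirely; that is cleaner and needs only that $F$ be $C^1$, $2\pi$-periodic, even, and nonvanishing. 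What the paper's heavier machinery buys is the degenerate case it explicitly cares about in its higher-dimensional/GDSE application: reference energies $E_0$ lying \emph{on} the PBC spectrum, where $F(k_0)=0$ for some $k_0$ and the integral must be read as a principal value. There the half-residue contributions $\pi i\times 2j$ from root pairs on the unit circle are what make the count close, and your proof excludes this case by hypothesis ("$E_0$ avoids the PBC spectrum"). Your substitution does extend to the principal-value setting for zeros paired under $k\mapsto 2\pi-k$ in the open intervals, but zeros sitting at the fixed points $k=0$ or $k=\pi$ would require a separate indentation argument, which the paper's root-pairing handles uniformly. Finally, your closing worry about the curve "turning around rather than crossing itself transversally" at $k=0,\pi$ is unnecessary: the cancellation is purely algebraic and does not depend on the geometry of the image curve.
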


The proof of Theorem~\ref{thm.theorem} is presented in the Appendix~\ref {appE}. As an application of Theorem~\ref{thm.theorem}, we present the following corollary for a widely used two-band Hamiltonian in Pauli matrix form.

\begin{corollary}
For a two-band Hamiltonian $H(k)=a_0(k)\sigma_0+a_1(k)\sigma_x+a_2(k)\sigma_y+a_3(k)\sigma_z$. If $a_0(k)$ and $\det[H(k)]$ are both even, then we have $w_{\text{BZ}, E_0}=0$ for any arbitrary $E_0\in\mathbb{C}$.
\label{corollary}
\end{corollary}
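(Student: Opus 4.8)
The plan is to reduce the Corollary to Theorem~\ref{thm.theorem} by showing that the hypotheses on $a_0(k)$ and $\det[H(k)]$ force the characteristic polynomial $F(k)=\det[H(k)-E_0\mathbb{I}]$ to be an even function of $k$ for every reference energy $E_0$. First I would write out the characteristic polynomial explicitly for the generic two-band Hamiltonian. Since $H(k)=a_0(k)\sigma_0+a_1(k)\sigma_x+a_2(k)\sigma_y+a_3(k)\sigma_z$, subtracting $E_0\mathbb{I}$ shifts only the $\sigma_0$ part, so $H(k)-E_0\mathbb{I}=(a_0(k)-E_0)\sigma_0+a_1(k)\sigma_x+a_2(k)\sigma_y+a_3(k)\sigma_z$. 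Using the standard identity for the determinant of a $2\times2$ matrix expressed in the Pauli basis, I would obtain
\begin{equation}
F(k)=\bigl(a_0(k)-E_0\bigr)^2-\bigl(a_1^2(k)+a_2^2(k)+a_3^2(k)\bigr).
\label{eq:Fk_detform}
\end{equation}

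The key observation is that the bracketed combination $a_1^2+a_2^2+a_3^2$ is directly expressible through the two quantities assumed to be even. Indeed, applying the same Pauli-basis determinant identity to $H(k)$ itself (i.e. at $E_0=0$) gives $\det[H(k)]=a_0^2(k)-\bigl(a_1^2(k)+a_2^2(k)+a_3^2(k)\bigr)$, so that $a_1^2(k)+a_2^2(k)+a_3^2(k)=a_0^2(k)-\det[H(k)]$. Substituting this into Eq.~\eqref{eq:Fk_detform} yields
\begin{equation}
F(k)=\bigl(a_0(k)-E_0\bigr)^2-a_0^2(k)+\det[H(k)]=E_0^2-2E_0\,a_0(k)+\det[H(k)].
\label{eq:Fk_final}
\end{equation}
This is the crucial algebraic simplification: $F(k)$ is an affine combination (with $E_0$-dependent but $k$-independent coefficients) of the constant $1$, the function $a_0(k)$, and the function $\det[H(k)]$.

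From Eq.~\eqref{eq:Fk_final} the conclusion follows almost immediately. By hypothesis both $a_0(k)$ and $\det[H(k)]$ are even in $k$, and the constant term $E_0^2$ is trivially even; since a linear combination of even functions is even, $F(k)$ is even in $k$ for every $E_0\in\mathbb{C}$. I would then invoke Theorem~\ref{thm.theorem}, which guarantees $w_{\text{BZ},E_0}=0$ whenever the characteristic polynomial is even, noting that the cosine--sine monomial form assumed there is automatically satisfied here because the entries $a_i(k)$ of a Bloch Hamiltonian on a lattice are finite trigonometric polynomials in $k$, so their squares and products again expand in such monomials with complex coefficients. Since $E_0$ was arbitrary, this establishes the vanishing of the winding number for all reference points, as claimed.

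I do not anticipate a serious obstacle here: the proof is essentially a short computation whose only subtlety is recognizing that one can eliminate the individual $a_1,a_2,a_3$ in favor of $\det[H(k)]$, which is exactly the quantity the hypothesis controls. The one point requiring mild care is verifying that $F(k)$ as given in Eq.~\eqref{eq:Fk_final} genuinely falls within the class of functions covered by Theorem~\ref{thm.theorem} — that is, that it can be written in the $\sum c_l\cos^m(ak)\sin^n(bk)$ form with $c_l\in\mathbb{C}$ — rather than merely being abstractly even; this is immediate for lattice Bloch Hamiltonians but would deserve a one-line remark to make the application of the theorem airtight.
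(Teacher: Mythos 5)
Your proposal is correct and matches the paper's own argument: the paper likewise proves the Corollary by noting the identity $F(k)=\det[H(k)-E_0\mathbb{I}]=\det[H(k)]-2E_0a_0(k)+E_0^2$ and then invoking Theorem~\ref{thm.theorem}. Your added remark that $F(k)$ must also lie in the trigonometric-polynomial class assumed by the theorem is a reasonable (and harmless) extra precaution that the paper leaves implicit.
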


According to Theorem~\ref{thm.theorem}, Corollary~\ref{corollary} can be easily proved when one notices that $F(k)= \det[H(k)-E_{0}\mathbb{I}]=\det[H(k)]-2E_0a_0(k)+E_0^2$.

Theorem~\ref{thm.theorem} and Corollary~\ref{corollary} provide a convenient way to determine zero winding number in a broad class of (quasi-) 1D non-Hermitian models. For example, for a 1D Hamiltonian with reflection symmetry, $RH(-k)=H(k)R$, where $R$ is a unitary operator, we can find that $\det[H(k)-E\mathbb{I}]=\det[R\big(H(-k)-E\mathbb{I}\big)R^{-1}]=\det[H(-k)-E\mathbb{I}]$. Since it satisfies the condition of Theorem~\ref{thm.theorem}, we can conclude that a system with reflection symmetry has a trivial winding number. Further, if its topological number is characterized by the winding number, i.e. for all $\mathbb{Z}$ classes, we can conclude that there is no skin effect~\cite{Kawabata2019PRX,Fang2020PRL},  like the model of class AIII~\cite{Liu2019,Gong2018PRX}. It would be worthwhile to make similar attempts for other possible symmetries to investigate the skin effect, as a method independent of the Ref.~\cite{Yi2022}, where the skin effect is determined by the GBZ condition for systems with characteristic equations constrained by specific symmetries.
The application of Theorem~\ref{thm.theorem} to a non-Hermitian Kitaev chain Hamiltonian is presented in Appendix~\ref{appE}. 

Besides, since higher-dimensional skin effect may be related to 1D winding number~\cite{Fang2022NatComm}, Theorem~\ref{thm.theorem} and Corollary~\ref{corollary} could give some insight for higher dimensions. In principle, one can define the spectral winding number for a straight line in the BZ~\cite{Fang2022NatComm}, which requires us to consider all possible reference energies on the complex energy plane, corresponding to entire BZ, for obtaining the information of skin effect on the stripe geometry. However, in the specific calculation, we may have a divergence situation, as illustrated in the Appendix~\ref{appE}. Despite this, once the Hamiltonian considered satisfies the conditions of Corollary~\ref{corollary}, we can immediately conclude that the winding number is zero for all reference energies, which guides us to the zero spectral winding number of the corresponding straight line in the higher-dimensional BZ, and further, the possible absence of skin effect on the corresponding stripe geometry. This conclusion could be beneficial for finding edge directions that do not exhibit the NHSE in systems with GDSE~\cite{Fang2022NatComm}, and further predicting sample geometries that might exhibit the skin effect, such as some non-Hermitian phononic crystals with exceptional points~\cite{zhou2023GDSE}.

In particular, for the NH chiral $p$-wave SFs considered here, we fix one $k_x$ ($k_y$) in the 2D BZ, and calculate the spectral winding number of the vertical line (horizontal line) for all reference energies on the complex plane. After repeating the procedure for all  $k_x$ ($k_y$), we will know the information of the winding number in the cylinder geometry with open boundaries. However, this is a two-band model for a fixed $k_y$ or $k_x$, and one should notice that the trivial one-dimensional winding number may not be equivalent to the absence of the skin effect in multi-band systems~\cite{guo2023anomalous,Fang2020PRL}. Despite this, we can turn our model to a single-band problem, as stated in the Appendix~\ref {appE}, and zero winding number could indicate the absence of the skin effect~\cite{Fang2020PRL}. Hence, Corollary~\ref{corollary} shows that the cylinder geometry with either arbitrary vertical boundaries or horizontal boundaries as the model we employed has a zero winding number, and the system will also not show NHSE in a square sample. But in general, once we change the shape of samples under fully OBC, we might find NHSE appears. This is a typical example of GDSE. Namely, even if the area of the energy spectrum is nonzero, the skin effect disappears in certain fully open boundary geometries~\cite{Fang2022NatComm}. The possible emergence of NHSE also means that we cannot use the traditional BZ to calculate the Chern number in a generic geometry, implying that dissipation potentially influences the topological properties of chiral SFs.

Since the system belongs to class D, and it is proved that, by Theorem~\ref{thm.theorem}, there is no skin effect for the non-Hermitian $p$-wave SFs we considered in Sec~\ref{sec: topological}, we can calculate the Chern number by its definition in BZ as well,
\begin{equation}
C=\frac{1}{2\pi}\int_{BZ}{\rm d}k_{x}{\rm d}k_{y}[\partial_{x}a_{y}(k_{x},k_{y})-\partial_{y}a_{x}(k_{x},k_{y})],
\end{equation}
where $a_{j}(k_{x},k_{y})$ is the Berry connection, $a_{j}(k_{x},k_{y})=-i_{R}\bra{\bold{k}} \partial_{j} \ket{\bold{k}}_{R}$
with $j=x,y$.
The results for different values of the complex interaction are labeled in Fig.~\ref{fig:fig2}, which verify the survival of the traditional bulk-edge correspondence again.

\section{conclusion}
\label{sec: conclusion}

In this manuscript, we have reformulated the mean-field theory for the non-Hermitian chiral $p$-wave superfluids. The effective Hamiltonian for higher partial-wave pairing can be derived in the same way by considering higher channels. We have demonstrated that, similar to the NH $s$-wave superfluid, the non-Hermitian chiral $p$-wave superfluids also have a reentrant SF transition, and the critical points of the phase transition are characterized by pure imaginary gaps, which correspond to the emergence of EPs. 

Taking the $p+ip$ SF as an example, some special and characteristic behaviors of the NH pairing SFs are revealed, like the upper bound of $\Im E/\Re E$ and the folding-like shape energy spectrum. Another notable feature observed from the energy spectrum is the robustness of point nodes and line nodes under dissipation, which might suggest dissipation can be used as a probe, in unconventional superconductors, for detecting the information of the structure of gap nodes~\cite{ando2015topological,li2019topoSC}, besides angle-resolved photoemission spectroscopy (ARPES)~\cite{Kushnirenko2018ARPES,OKAZAKI2013ARPES,gao2024arpes} and measuring thermodynamic quantities~\cite{chapai2023PdTe,MATSUDA2003,Hasan2022,imajo2016}.

We studied the energy spectra of the non-Hermitian $p+ip$ superfluid on a 2D square lattice to investigate whether dissipation induced by the NH pairing influences the topological properties of this system. We find that the edge state, in the cylinder geometry, is robust to the dissipation. Then, although the PBC energy spectrum covers a finite area on the complex plane, implying the universal skin effect can exist in a generic 2D geometry, we proved the absence of NHSE in the cylinder geometry, and further, the absence of NHSE in the square geometry under fully OBC. This explains why topological properties, like Chern number and the edge states, are robust to dissipation in the system we are concerned with.

Finally, we put forward Theorem~\ref{thm.theorem} and Corollary~\ref{corollary} as criteria by only considering the algebra form of the Hamiltonian to determine zero winding number in some models, so as to exclude the presence of NHSE, instead of specifically calculating the winding number of (quasi-) 1D systems. They also hopefully shed light on the discovery of GDSE in some higher-dimensional NH systems. A future study may reveal the behavior of skin modes in those systems with different algebra structures of the Hamiltonian in a similar method.

\section{Acknowledgements}
We are grateful to Yan He, Wen Huang, and Yu-Min Hu for many valuable discussions. This work is supported by the NSFC under Grant Nos. 12174273 and 11704267. 

\appendix 
\begin{widetext}
\section{Non-Hermitian mean-field theory by path integral approach}
\label{appA}
We present the details of the NH gap equation in this section. In the path integral representation, the partition function is written as :
\begin{eqnarray}
\mathcal Z & = & \int\mathcal{D}[\bar{c}, c]e^{-S(\bar{c},c)},\\
S(\bar{c},c) & = & \int_{0}^{\beta}d\tau \Big[\sum_{\bk,\sigma}\bar{c}_{\bk,\sigma}(\tau)(\partial_{\tau}+\xi_{\bk})c_{\bk,\sigma}(\tau)+\frac{1}{N}\sum_{\bk, {\bk}^\prime} V(\bk, {\bk}^\prime) \bar{c}_{\bk,\ua}(\tau)\bar{c}_{-\bk,\da}(\tau)c_{-\bk^\prime,\da}(\tau)c_{\bk^\prime,\ua}(\tau)\Big],
\end{eqnarray}
For the $p+ip$ chiral superfluid, we only consider the component with $l=1$ and $m=1$ in Eq.~\eqref{eq:Vkk'}, then the partition function becomes,
\begin{equation}
\mathcal Z= \int_{0}^{\beta}d\tau \Big[\sum_{\bk,\sigma}\bar{c}_{\bk,\sigma}(\tau)(\partial_{\tau}+\xi_{\bk})c_{\bk,\sigma}(\tau)+  \frac{V_1}{N} \sum_{\bk, \bkp}\Gamma_{\bk} \Gamma_{\bkp}^* \bar{c}_{\bk,\ua}(\tau)\bar{c}_{-\bk,\da}(\tau)c_{-\bkp,\da}(\tau)c_{\bkp,\ua}(\tau)\Big],
\end{equation}
where $c$ and $\bar{c}$ are Grassmann variables, and $V(\bk,\bkp)$ is expanded as mentioned in the main text,  with $\Gamma_{\bk}=\sqrt{4\pi}Y_{1,1}(\theta_{\bk},\phi_{\bk})$, and $\Gamma_{\bk}^*=\sqrt{4\pi}Y_{1,1}^*(\theta_{\bk},\phi_{\bk})$.  As standard Hubbard-Stratonovich transformation, after introducing auxiliary bosonic fields $\bar{\Delta}_{\bk}(\tau)$ and $\Delta_{\bk}(\tau)$, the partition function is obtained, 
\begin{eqnarray}
\mathcal Z & = & \int\mathcal{D}[\bar{\Delta},\Delta,\bar{c}, c]e^{-S(\bar{\Delta},\Delta,\bar{c},c)},\\
S(\bar{\Delta},\Delta,\bar{c},c)&=&\int_{0}^{\beta}d\tau\Big[\sum_{\bk,\sigma}\bar{c}_{\bk,\sigma}(\tau)(\partial_{\tau}+\xi_{\bk})c_{\bk,\sigma}(\tau)+\sum_{\bk}\Big(\Gamma^*_{\bk}\bar{\Delta}_{\bk}(\tau)c_{-\bk,\da}(\tau)c_{\bk,\ua}(\tau) \nonumber \\
&+&\Gamma_{\bk}\Delta_{\bk}(\tau)\bar{c}_{\bk,\ua}(\tau) \bar{c}_{-\bk,\da}(\tau)+\frac{\bar{\Delta}_{\bk}(\tau)\Delta_{\bk}(\tau)}{V_1/N}\Big)\Big].
\end{eqnarray}
By the Fourier transformation, $c_{\bk,\sigma}(\tau)=\sum_{\omega_n}e^{-i\omega_n\tau}c_{\bk,\sigma}(\omega_n)/\sqrt{\beta}$, $\bar{c}_{\bk,\sigma}(\tau)=\sum_{\omega_n}e^{i\omega_n\tau}\bar{c}_{\bk,\sigma}(\omega_n)/\sqrt{\beta}$, $\Delta_{\bk,\sigma}(\tau)=\sum_{\Omega_l}e^{-i\Omega_l\tau}\Delta_{\bk,\sigma}(\Omega_l)/\sqrt{\beta}$, and $\bar{\Delta}_{\bk,\sigma}(\tau)=\sum_{\Omega_l}e^{i\Omega_l\tau}\bar{\Delta}_{\bk,\sigma}(\Omega_l)/\sqrt{\beta}$, we obtain
\begin{eqnarray}
S(\bar{\Delta},\Delta,\bar{c},c)&=&\frac{N}{V_1}\sum_{\bk,\Omega_l}\bar{\Delta}_{\bk}(\Omega_l)\Delta_{\bk}(\Omega_l)+\sum_{\bk,\sigma,\omega_n}\bar{c}_{\bk,\sigma}(\omega_n)(-i\omega_n+\xi_{\bk})c_{\bk,\sigma}(\omega_n)\\ \nonumber
&+&\sum_{\bk,\omega_n,\Omega_l}\frac{1}{\sqrt{\beta}} \Gamma^*_{\bk}\bar{\Delta}_{\bk}(\Omega_l)c_{-\bk,\da}(\Omega_l-\omega_n)c_{\bk,\ua}(\omega_n)\\ \nonumber
&+&\sum_{\bk,\omega_n,\Omega_l}\frac{1}{\sqrt{\beta}}\Gamma_{\bk}\Delta_{\bk}(\Omega_l)\bar{c}_{\bk,\ua}(\omega_n) \bar{c}_{-\bk,\da}(\Omega_l-\omega_n),\
\end{eqnarray}
where $\omega_n$  ($\Omega_l$) is the Matsubara frequencies of fermions (bosons). In the mean-field theory by saddle-point approximation, we can assume that in the superfluid state the pairs of bosons are condensed, and the dominant contribution comes from the mode at $\Omega_l=0$ and $\bk=0$, so we neglect the spatial and temporal fluctuations of $\Delta_{\bk}(\Omega_l)$ and $\bar{\Delta}_{\bk}(\Omega_l)$. Then the action is simplified as 
\begin{equation}
S(\bar{\Delta},\Delta,\bar{c},c)=\frac{\beta N}{V_1} \bar{\Delta}\Delta+\sum_{\bk,\omega_n} 
\begin{pmatrix}
\bar{c}_{\bk,\ua}(\omega_n) & c_{-\bk,\da}(-\omega_n)
\end{pmatrix}
\begin{pmatrix}
-i\omega_n+\xi_{\bk} & \Delta_{\bk}\\
\bar{\Delta}_{\bk} & -i\omega_n-\xi_{\bk}) \\
\end{pmatrix}
\begin{pmatrix}
c_{\bk,\ua}(\omega_n) \\ \bar{c}_{-\bk,\da}(-\omega_n)
\end{pmatrix}
\end{equation}
where $\Delta=\Delta_{\bk=0}(0)/\sqrt{\beta}$, $\Delta_{\bk}=\Gamma_{\bk}\Delta_{\bk=0}(0)/\sqrt{\beta}=\Gamma_{\bk}\Delta$, and $\bar{\Delta}_{\bk}=\Gamma^*_{\bk}\bar{\Delta}_{\bk=0}(0)/\sqrt{\beta}=\Gamma^*_{\bk}\bar{\Delta}$.
After integrating out the fermionic degrees of freedom by the Gaussian integral identity, we obtain
\begin{equation}
\mathcal Z  =  \int\mathcal{D}[\bar{\Delta}, \Delta]e^{-S_{\text{eff}}(\bar{\Delta},\Delta)},
\end{equation}
where the effective action is derived as
\begin{eqnarray}
S_{\text{eff}}(\bar{\Delta},\Delta) & = & \frac{\beta N}{V_1} \bar{\Delta}\Delta-\sum_{\bk,\omega_n}\ln\Big[-\det
\begin{pmatrix}
-i\omega_n+\xi_{\bk} & \Delta_{\bk}\\
\bar{\Delta}_{\bk} & -i\omega_n-\xi_{\bk}) \\
\end{pmatrix}
\Big],\\ \nonumber
&=&\frac{\beta N}{V_1} \bar{\Delta}\Delta-\sum_{\bk,\omega_n} \ln (\omega_n^2+\xi_{\bk}^2+\bar{\Delta}_{\bk}\Delta_{\bk}).
\end{eqnarray}

Employing the saddle-point approximation $\partial S_{\text{eff}}/\partial \Delta= \partial S_{\text{eff}}/\partial \bar{\Delta}=0$, we obtain the gap equation for the NH chiral $p$-wave SFs,
\begin{equation}
\sum_{\bk}\frac{|\Gamma_{\bk}|^2}{2E_{\bk}}\tanh \frac{\beta E_{\bk}}{2}=\frac{N}{V_{1}},
\end{equation}
where $E_{\bk}=\sqrt{\xi_{\bk}^{2}+\Delta_{\bk}\bar{\Delta}_{\bk}}$.
If we take $l=0$ in Eq.~\eqref{eq:Vkk'}, the gap equation is reduced to $\sum_{\bk}\tanh\big(\beta\sqrt{\xi^{2}_{\bk}+\bar{\Delta}\Delta}/2\big)/\big(2\sqrt{\xi^{2}_{\bk}+\bar{\Delta}\Delta}\big)=N/V_1$, which is the NH gap equation for the $s$-wave superfluid. 

\section{Condensation energy of the superfluid phase}
\label{appB}

Condensation energy is defined as the energy difference between the superfluid state and the normal state.
\begin{eqnarray}
E_{\text{c}}&=&\frac{1}{\beta}\Big(S_{\text{eff}}(\Delta,\bar{\Delta})-S_{\text{eff}}(0,0)\Big)
=\frac{N}{V_{1}}\bar{\Delta}\Delta-\frac{1}{\beta}\sum_{\bk,\omega_n}\Big[ \ln \Big(\omega_n^2+\xi_{\bk}^2+\bar{\Delta}_{\bk}\Delta_{\bk}\Big)-\ln \Big(\omega_n^2+\xi_{\bk}^2\Big)\Big] \\
&=&\frac{N}{V_{1}}\bar{\Delta}\Delta-\frac{1}{\beta}\sum_{\bk,\omega_n}\ln \Big(1+\frac{\bar{\Delta}_{\bk}\Delta_{\bk}}{\omega_n^2+\xi_{\bk}^2}\Big).
\end{eqnarray}

By using the Matsubara frequency summation technique, the condensation energy is given as:
\be
E_{\text{c}}=\frac{N}{V_{1}}\bar{\Delta}\Delta-\sum_{\bk}\Big(\sqrt{\xi_{\bk}^{2}+\Delta_{\bk}\bar{\Delta}_{\bk}}-\lvert \xi_{\bk} \rvert \Big).
\label{eq:Ec}
\ee

\section{Explanation of some characteristics of the phase boundary}
\label{appC}

In this section, we study the physical origin of the phase boundary between the normal phase and the metastable superfluid phase, which is associated with the pure imaginary superfluid order parameters. 

A distinctive feature of non-Hermitian systems is the emergence of exceptional points (EPs), where the Hamiltonian cannot be diagonalized~\cite{Heiss2012,berry2004,Heiss2004}, and the eigenstates and eigenvalues are degenerate~\cite{Bergholtz2021RMP}. To investigate the conditions of EPs, we consider a general two-level system:
\begin{equation}
H=(d_{0R}+id_{0I})\sigma_0+(d_{1R}+id_{1I})\sigma_x+(d_{2R}+id_{2I})\sigma_y+(d_{3R}+id_{3I})\sigma_z,
\label{eq:2levelH}
\end{equation}
where $d_{iR(I)}$ is real, $\sigma_0$ is the identity matrix, and $\sigma_{x,y,z}$ represent Pauli matrices. When $d_{iI}=0, (i=0,1,2, 3)$, the Hamiltonian is reduced to the Hermitian one.
The eigenvalues of Eq.~\eqref{eq:2levelH} are $E_{\pm}=(d_{0R}+id_{0I})\pm\sqrt{\sum_{i=1}^3(d_{iR}+id_{iI})^2}$, with energy difference as $\Delta E=E_{+}-E_-=2\sqrt{\sum_{i=1}^3(d_{iR}+id_{iI})^2}$.
For degenerate eigenstates, let $\Delta E$ be zero, and one can find the conditions for EPs:
\bea
(d^2_{1R}+d^2_{2R}+d^2_{2R})-(d^2_{1I}+d^2_{2I}+d^2_{3I})&=&0,\\
d_{1R}d_{1I}+d_{2R}d_{2I}+d_{3R}d_{3I}&=&0. 
\eea
Notably, these two necessary conditions for EPs are general for any Hamiltonian without any symmetry constraint. For non-Hermitian systems with parity-time ($PT$) or parity-particle-hole ($CP$) symmetry, only a single condition is needed~\cite{Okugawa2019}.

The non-Hermitian chiral $p$-wave superfluids on a two-dimensional square lattice can be written in Pauli matrix representation: 
\be 
H(\bk)=\xi_{\bk}\sigma_z+\Delta \sin k_x\sigma_x-\Delta \sin k_y\sigma_y,
\label{eq:Hk}
\ee 
where the order parameter is assumed to be complex as $\Delta=a+ib, (b\ne 0)$, and $\mu=-1$. Here we have ignored the factor $\sqrt{3/2}$ in $\Delta_{\bk}$ and $\bar{\Delta}_{\bk}$ without losing generality. Imposing the above two conditions on $H(\bk)$ in Eq.~\eqref{eq:Hk}, we obtain the following conditions of EPs for the chiral $p$-wave superfluids,
\bea
\xi^2_{\bk}+(a^2-b^2)(\sin^2 k_x+\sin^2 k_y)&=&0, \\
ab(\sin^2 k_x+\sin^2 k_y)&=&0.
\eea
Since $b\ne 0$, above conditions are satisfied when $a=0$ and  $\xi_{\bk}=\pm b\sqrt{\sin^2 k_x+\sin^2 k_y}$. This proves that EPs, satisfying $\xi_{\bk}=\pm b\sqrt{\sin^2 k_x+\sin^2 k_y}$, emerge when order parameter is pure imaginary.

Substituting the pure imaginary $\Delta=ib$ into $H(\bk)$ in Eq.~\eqref{eq:Hk}, in which $\bar{\Delta}_{\bk}=-\Delta^*_{\bk}$ is satisfied,  we obtain the following Hamiltonian,
\be
H(\bk,\Delta=ib)=i b\sin k_x \sigma_x-i b \sin k_y\sigma_y+\xi_{\bk}\sigma_z.
\label{eq:ImH}
\ee
It can be verified that $(CP)H(\bk,\Delta=ib)(CP)^{-1}=-H(\bk,\Delta=ib)$ holds, where $CP=\sigma_x K$, $\sigma_x$ is Pauli matrix, and $K$ is complex conjugate operator. It means the non-Hermitian Hamiltonian with pure imaginary order parameter in Eq.~\eqref{eq:ImH} satisfies $CP$ symmetry.
$CP$ symmetry requires two eigenvalues appear in pure imaginary or anti-complex-conjugate pairs, as $E_{\bk}\in i \mathbb R$ or $(E_{\bk},-E_{\bk}^*)$.
When $\xi_{\bk}=\pm b\sqrt{\sin^2 k_x+\sin^2 k_y}$, $H(\bk,\Delta=ib)$ in Eq.~\eqref{eq:ImH} cannot be diagonalized, and the eigenstates are degenerate, indicating the spontaneous $CP$ symmetry breaking, accompanied by emergence of the EPs.
Furthermore, for a $d$-dimensional system with $CP$ symmetry, its EPs will construct $(d-1)$-dimensional surface~\cite{Yoshida2019,Budich2019,yamamoto2019theory}. For our two-dimensional system, the EPs exist in the form of exceptional lines (ELs). As shown in Fig~\ref{fig:fig3} (c) in the main text, the two areas separated by the ELs correspond to the $CP$ symmetry phase (white area) and the $CP$ symmetry spontaneous breaking phase (blue area), respectively.

If we consider the gap equation written in spherical harmonic form:

\begin{equation}
\sum_{k}\frac{4\pi Y_{1,1}(\theta_k,\phi_k)Y^{*}_{1,1}(\theta_k,\phi_k)}{2E_{k}}=\frac{N}{V_{1}},
\label{gap:spher}
\end{equation}

with $E_{k}=\sqrt{\xi_{k}^{2}+\frac{3}{2}\Delta^{2}|Y_{1,1}(\theta_k,\phi_k)|^2}$ and $V_1=U_1+i\gamma/2$. For $N\to\infty$, we have:
\begin{equation}
\begin{aligned}
\frac{1}{V_1}&=\int_{-1}^{1}\mathrm{d} \cos\theta_k\int_{-\omega_0}^{\omega_0} \mathrm{d}\xi N(\xi)\frac{3(1-\cos^2\theta_k)}{8\sqrt{\xi^{2}+\frac{3}{2}\Delta^{2}\sin^2\theta_k}}\\
&=\rho_0\frac{3}{4}\int_{-1}^{1}\mathrm{d}x(1-x^2)\ln(\frac{\omega_0+\sqrt{\omega^2_0+\Delta^2_k}}{\sqrt{\frac{3}{2}}\Delta\sqrt{1-x^2}})\\
&=\rho_0\frac{3}{4}\int_{-1}^{1}\mathrm{d}x(1-x^2)\ln(\frac{2\omega_0}{\sqrt{\frac{3}{2}}\Delta\sqrt{1-x^2}}) \\
&=\rho_0\ln(\frac{e^{5/6}}{\sqrt{6}}\frac{2\omega_0}{\Delta}).
\label{cDOS}
           \end{aligned}
\end{equation}

We apply the constant DOS ($N(\xi)=\rho_0$) and set $\hbar$ to 1 during the calculation. Here, for the lpenultimate “=” of Eq. \eqref{cDOS}, we follow the BCS weak coupling approximation condition ($|\omega_0/\Delta|\gg 1$). Then we rewrite Eq. \eqref{cDOS} into this form:
\begin{equation}
\frac{1}{\rho_0}(\frac{U_1}{|V_1|^2}-i\frac{\gamma}{2|V_1|^2})=\ln(\frac{e^{5/6}}{\sqrt{6}}|\frac{2\omega_0}{\Delta}|)+i\mathrm{Arg}(\frac{e^{5/6}}{\sqrt{6}}\frac{2\omega_0}{\Delta}).
\label{11}
\end{equation}

Using the same way as that in the Ref.~\cite{yamamoto2019theory}, the analytic solution for the phase boundary of a semicircular shape can be derived. The results are quantitatively in agreement with the NH $s$-wave superfluid. However, one should also notice that the weak coupling condition will no longer be valid for a very large $|\Delta|$ on the boundary, which satisfies that the actual phase boundary is not a semicircle for large $\gamma$.

\section{A proof of an upper bound of the ratio $\Im E/\Re E$}
\label{appD}
In this section, we prove that, for the two-dimensional chiral superfluids on a square lattice with non-Hermitian pairing in Eq.~\eqref{eq:H_lattice} under the periodic boundary conditions, each eigenenergy of Eq.~\eqref{eq:BlochH} in the complex energy spectrum as shown in Fig.~\ref{fig:fig4} (c) satisfies following inequality:
\be
\Im E/\Re E \le \Im \Delta/ \Re\Delta.
\ee

We start with the Bloch Hamiltonian by ignoring the factor $\sqrt{3/2}$ in $\Delta_{\bk}$ and $\bar{\Delta}_{\bk}$ without losing generality, , 
\begin{equation}
H_{\bk}=
\begin{pmatrix}
\xi_{\bk} & \Delta_{\bk}\\
\bar{\Delta}_{\bk} & -\xi_{\bk} \\
\end{pmatrix}
=\begin{pmatrix}
\xi_{\bk} & \Delta (\sin{k_x}+i \sin{k_y})\\
\bar{\Delta}(\sin{k_x}-i \sin{k_y}) & -\xi_{\bk} \\
\end{pmatrix},
\label{eq:BlochH}
\end{equation}
where  $\xi_{\bk}=-2t(\cos{k_x}+\cos{k_y})-\mu$, and $\bar{\Delta}=\Delta$ by gauge fixing. The Bloch Hamiltonian can be diagonalized as 
$\Lambda_{H}=\begin{pmatrix}
E_{\bk} & 0 \\
0 &  -E_{\bk}\\
\end{pmatrix}$,
where $ E_{\bk}=\sqrt{\xi_{\bk}^{2}+\Delta_{\bk}\bar{\Delta}_{\bk}}=\sqrt{{\xi^{2}_{\bk}}+\Delta^{2}(\sin^2 k_x+ \sin^2 k_y)}$. To investigate the relation between $\Re E$
and $\Im E$, we introduce following symbols:  $\Delta=a+ib, E_{\bk}=x+iy$ with $a, b>0$, and rewrite the formulae of $E_{\bk}$, then we obtain two equations,
\bea
\begin{aligned}
x^2-y^{2}&=\xi^{2}_{\bk}+(a^2-b^2)(\sin^{2}k_x+\sin^{2}k_y), \\
 xy&=ab(\sin^{2}k_x+\sin^{2}k_y).
\end{aligned}
 \label{curveeq}
\eea
Once we fix the values of $(a,b,k_x,k_y)$, two equations above correspond to two set of hyperbolae about $x$ and $y$, and the intersection points of these curves 
yielding one solution $(\Re E_{\bk},\Im E_{\bk})$ on the complex energy spectrum at given $(k_x,k_y)$. 
With fixed $\Delta$, and let $k_x$ and $k_y$ spread the entire first Brillouin zone, it's clear that the intersection points are distributed on a surface rather than a curve, due to the $\bk$ dependence in the coefficients of hyperbola equations. This is consistent with the numerical results in Fig.~\ref{fig:fig4} in the main text.

Now, let us focus on the two equations in Eq.~\eqref{curveeq}. At point nodes, where $\sin^{2}k_x+\sin^{2}k_y=0$, 
we will get ($x=\pm\xi_{\bk},y=0$) on the $\Re E-\Im E$ spectrum, where $s:=y/x=0$. In the case where $\sin^{2}k_x+\sin^{2}k_y \ne 0$ , above equations can be divided into:
\be
\frac{1}{s}-s=\frac{\xi^{2}_{\bk}}{ab(\sin^{2}k_x+\sin^{2}k_y)} +\frac{a^2-b^2}{ab},
 \label{eq:slope1}
\ee
where $s=y/x$ is defined, which is the ratio of the imaginary part to the real part of each $E$. It's easy to find that the function $f(s):=1/s-s$ decreases with $s$ increasing in both regimes $s>0$ and $s<0$. This means that when the right-hand side of Eq.~\eqref{eq:slope1} or the function $f(s)$ arrives at the minimum value, the ratio $s$ takes the maximum value. Because $\xi^{2}_{\bk}/(\sin^2 k_x+\sin^2 k_y)\ge 0$ always holds, and $a,b>0$ is assumed, in the case  that $\sin^{2}k_x+\sin^{2}k_y \ne 0$, one can find $s_{\text{max}}=(y/x)_{\text{max}}=b/a$, implying
\be
\frac{\Im E}{\Re E}\le \frac{\Im \Delta}{\Re \Delta}.
\label{eq:ineq}
\ee
It means the ratio of the imaginary part to the real part of each energy value $E$ in the complex energy spectrum does not exceed the order parameter gap $\Delta$  counterpart.

A similar constraint exists in other systems with the same algebra structure, and we take the non-Hermitian $s$-wave superfluidity as an example. The shape of the energy spectrum for a non-Hermitian $s$-wave superconductor is determined by curve equations:
\begin{equation}
\begin{aligned}
x^2-y^{2}&=\xi^{2}_{\bk}+a^2-b^2, \\
 xy&=ab.
           \end{aligned}
\end{equation}
The second equation above is a fixed-coefficient hyperbola equation, which is independent of $(k_x,k_y)$, once $\Delta=a+ib$ is chosen. The energy spectrum will be limited on a curve, rather than an area spanned by groups of curves, when $k_x$ and $k_y$ traverse the entire Brillouin zone. Despite this, we can verify that the energy ratio constraint still exists. According to the theorem given by~\cite{Fang2022NatComm}, there is no universal skin effect for a non-Hermitian $s$-wave superconductor. 

Similar energy ratio constraints can be derived for the non-Hermitian higher-order superconductors, such as $d$-wave, in this approach. Since the coefficients of hyperbolic equations contain $\bk$ dependence introduced by $\Delta_{\bk}$, it directly leads that the $\Re E-\Im E$ spectrum appears in a shape with nonzero area. 

In addition, we comment that the folding-like behavior observed in Fig.~\ref{fig:fig4} (c) can be explained by Eq.~\eqref{eq:ineq}. As we discussed earlier, the function $f(s):=1/s-s$ decreases with increasing $s$ in both regimes $s>0$ and $s<0$. Because the coefficient $\xi^{2}_{\bk}/(\sin^{2}k_x+\sin^{2}k_y)\ge 0$ holds for any $\bk$, according to Eq.~\eqref{eq:slope1}, the ratio $s=y/x=\Im E/ \Re E$ arrives at its maximal value when $\xi_{\bk}$ approaches to zero. Let us consider the evolution of $\xi_{\bk}$ from negative to zero. In this process, the coefficient $\xi^{2}_{\bk}/(\sin^{2}k_x+\sin^{2}k_y)$ and the right hand side of Eq.~\eqref{eq:slope1} decrease, leading to the increasement of the ratio $\Im E/ \Re E$. The ratio arrives at its maximal value as $\Im \Delta/\Re \Delta$ when $\xi_{\bk}=0$. When $\xi_{\bk}$ increases further from zero to a positive value,  the coefficient $\xi^{2}_{\bk}/(\sin^{2}k_x+\sin^{2}k_y)$ and the right hand side of Eq.~\eqref{eq:slope1} increase, leading to the decrement of the ratio $\Im E/ \Re E$. Therefore, points with larger ratio than $\Im \Delta/\Re \Delta$ are prohibited in the complex energy spectrum. More specifically, we can fix the value of $k_x$, and let $k_y$ evolve from $-\pi$ to $\pi$, and then we obtain a curve on the complex energy plane, which could not exceed the function $y=(\Im \Delta/\Re \Delta) x$. When the curve ``touches" this boundary line, it reflects inwards. This explains the folding-like behavior intuitively.

Finally, as an example, we consider the energy spectrum of $d_{x^2-y^2}$ superconductivity~\cite{xiang2022d} on the square lattice with PBC to show the robustness of the line nodes, where $\Delta_k$ can be taken as $\Delta(\cos k_x-\cos k_y)/2$. The quasi-particle energy is given by,
\begin{equation}
E_k=\sqrt{\Big(-2t(\cos k_x+\cos k_y)-\mu\Big)^2+\frac{\Delta^2}{4}(\cos k_x-\cos k_y)^2},
\end{equation}
By considering the complex-valued $\Delta$, it is easy to verify that $\Im (E_k)=0$ along the line nodes $k_x=\pm k_y$ and $k_x=2\pi -k_y$, as shown in FIG.~\ref{fig:fig6}.

\begin{figure}[ht]
\centering
\includegraphics[width=10cm]{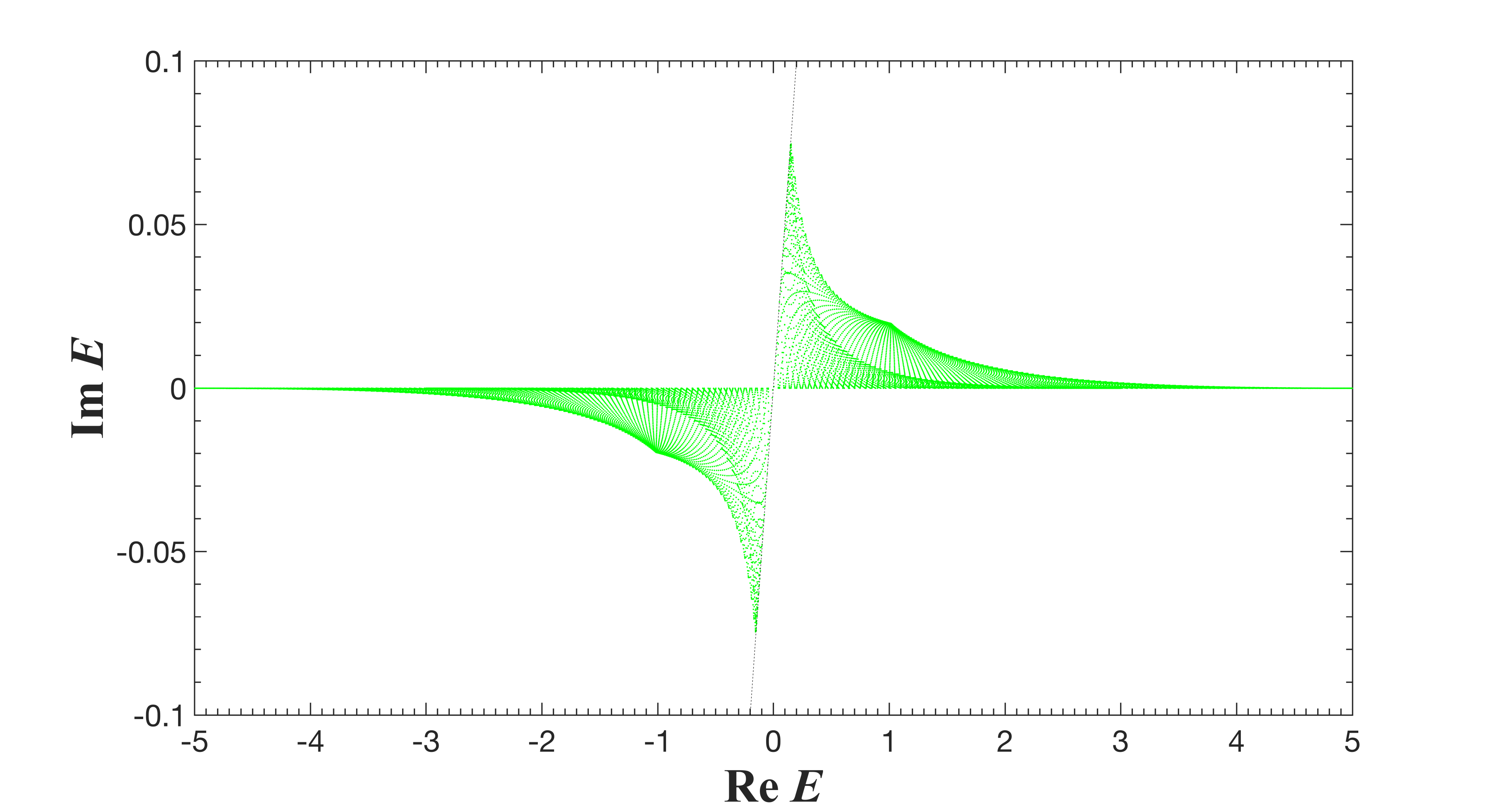}
\caption{The energy spectrum of non-Hermitian $d_{x^2-y^2}$ superconductivity. One can find the robustness of line nodes in this case, where $\mu/t=-1, \Delta/t=0.2+0.1i$, and the system size $N=250\times250$. The black dashed line represents the linear function $\Im E=(\Im\Delta/\Re\Delta)\Re E$.}
\label{fig:fig6}
\end{figure}

\section{The proof and applications of Theorem~\ref{thm.theorem}}
\label{appE}

\subsection{A proof of Theorem~\ref{thm.theorem}}
Firstly, we prove Theorem~\ref{thm.theorem} provided in the main text. We first confirm that the form of even function $F(k)=\det[H(k)-E_{0}\mathbb{I}]$ is related to a palindromic polynomial of $z$ after substitution $z=e^{ik}$. With the palindromic polynomial as a bridge, we can prove the complex integral in Eq.~\eqref{eq:WN} is zero, in the following two steps.

\begin{lemma}
If $F(k)\equiv \det[H(k)-E_{0}\mathbb{I}]=\sum_{l,m,n,a,b} c_{l}\cos^m(ak)\sin^n(bk)$ is even of $k$, then $F(k)$ can be transformed into $P_{2\alpha}(z)/z^\alpha$, where $P_{2\alpha}(z)$ is a complex palindromic polynomial with $\deg~P_{2\alpha}(z)=2\alpha$. Here,  $z=e^{ik}$ ($k\in\mathbb{R}$), $l,m,n,a,b\in\mathbb{N}$, $c_{l}\in\mathbb{C}$ and $\alpha=(ma+nb)_{\mbox{max}}$.
 \label{le:lemma1}
\end{lemma}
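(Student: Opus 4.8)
The plan is to move from the real variable $k$ to the circle variable $z=e^{ik}$, show that $F$ becomes a Laurent polynomial whose coefficient sequence is symmetric about the $z^0$ term, and then clear the negative powers to expose the palindrome. First I would substitute $\cos(ak)=(z^a+z^{-a})/2$ and $\sin(bk)=(z^b-z^{-b})/(2i)$ into each monomial $c_l\cos^m(ak)\sin^n(bk)$. Expanding the two binomials, every such monomial becomes a Laurent polynomial in $z$ supported on integer powers between $-(ma+nb)$ and $+(ma+nb)$. Summing over all terms, $F(k)$ is turned into a Laurent polynomial $G(z)=\sum_{j=-\alpha}^{\alpha} g_j z^j$ with $g_j\in\mathbb{C}$ and $\alpha=(ma+nb)_{\max}$. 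Note that a single monomial with $\sin^n$ merely picks up a factor $(-1)^n$ under $z\mapsto z^{-1}$, so it need not be symmetric by itself; the symmetry will come from the global hypothesis rather than term-by-term.

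Second, I would exploit the evenness of $F$. Since $k\mapsto -k$ sends $z=e^{ik}$ to $z^{-1}$, the trigonometric polynomial $F(k)=\sum_j g_j e^{ijk}$ satisfies $F(-k)=\sum_j g_j e^{-ijk}=\sum_j g_{-j}e^{ijk}$. The hypothesis $F(k)=F(-k)$ for all real $k$, together with the uniqueness of the Fourier representation of a trigonometric polynomial (equivalently, the linear independence of $\{e^{ijk}\}$), then forces the coefficient identity $g_j=g_{-j}$ for every $j$. This symmetry of the Laurent coefficients about $j=0$ is the heart of the argument and is exactly what makes the palindrome appear.

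Finally, I would clear denominators by setting $P_{2\alpha}(z):=z^{\alpha}G(z)=\sum_{i=0}^{2\alpha} p_i z^i$, so that $p_i=g_{i-\alpha}$. This is a genuine polynomial of degree at most $2\alpha$, and the computation $p_{2\alpha-i}=g_{\alpha-i}=g_{-(i-\alpha)}=g_{i-\alpha}=p_i$ shows $p_i=p_{2\alpha-i}$, i.e. $P_{2\alpha}$ is palindromic, as claimed. With the palindrome established, the stated role of this lemma — serving as the bridge to the winding-number computation in the next step — is set up, since palindromic polynomials have their roots paired as $r\leftrightarrow 1/r$.

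The step requiring the most care is the passage from the functional identity (evenness on the circle) to the coefficient identity $g_j=g_{-j}$: one must invoke Fourier uniqueness cleanly rather than assume term-by-term symmetry, because individual monomials with odd powers of $\sin$ are antisymmetric and only cancel in the sum. A secondary technical point is the claim $\deg P_{2\alpha}=2\alpha$ exactly, which needs the leading coefficient $g_{\alpha}$ to be nonzero; this holds provided the highest-total-degree contributions do not cancel, and if they do one simply reinterprets $\alpha$ as the largest surviving total degree, leaving the palindromic conclusion and the subsequent winding-number argument intact.
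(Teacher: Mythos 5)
Your proof is correct, but it takes a genuinely different route from the paper's. The paper argues term by term: it shows via the binomial theorem that $\cos^m(ak)$ maps to a palindromic polynomial over $z^{am}$ under $z=e^{ik}$, that the same holds for $\sin^{2n}(bk)$, that products of palindromic polynomials are palindromic, and that terms with odd powers of $\sin$ spoil the palindrome and are therefore excluded by the evenness hypothesis (which the paper reads as ``$F(k)$ contains no odd power of $\sin$''); the common-denominator sum then gives $P_{2\alpha}(z)/z^{\alpha}$. You instead expand everything into a single Laurent polynomial $G(z)=\sum_{j=-\alpha}^{\alpha}g_j z^j$ and derive the coefficient symmetry $g_j=g_{-j}$ from the functional identity $F(k)=F(-k)$ via uniqueness of the Fourier representation, then multiply by $z^{\alpha}$. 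Your route is the more robust of the two: it covers the case where the given expansion of $F$ does contain odd powers of $\sin$ whose antisymmetric contributions cancel in the sum, a situation the paper's term-by-term argument silently excludes by identifying ``even function'' with ``no odd $\sin$ powers in the representation.'' (The paper's second, ``simplified'' proof in Appendix E also assumes $n=2p$ termwise.) You also correctly flag the only loose end common to both arguments, namely that $\deg P_{2\alpha}=2\alpha$ requires the top coefficient $g_{\alpha}$ to survive cancellation; since $g_{\alpha}=g_{-\alpha}$, redefining $\alpha$ as the largest surviving index preserves the palindromic structure and everything downstream. What the paper's approach buys in exchange is a more explicit, constructive picture of which trigonometric monomials are individually ``safe,'' which is what motivates the statement of the theorem in terms of the algebraic form of the Hamiltonian.
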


The function $P_{2\alpha}(z)$, as a complex palindromic polynomial of order $2\alpha$, can be written in following general form: $P(z)=t(z^{2\alpha} +k_1z^{2\alpha-1}+\cdots +k_{\alpha-1}z^{\alpha+1}+k_\alpha z^{\alpha}+k_{\alpha-1}z^{\alpha-1}+\dots+k_1z^1+1)$, where $t \in\mathbb{C}$ ($t\ne 0$), and $k_{i}\in\mathbb{C}, (i=1, 2, \dots ,\alpha-1, \alpha)$,  and $z$ is a complex variable.  The palindromic polynomial $P_{2\alpha}(z)$ satisfies the condition that the coefficients of $z^{i}$ and $z^{2\alpha-i}$ are the same. 

We first present a proof for Lemma~\ref{le:lemma1}.

\begin{proof}
Applying the substitution $z=e^{ik}$, the winding number in Eq.~\eqref{eq:WN} is given by: 
\begin{equation}
w_{\text{BZ},E_{0}}=\frac{1}{2\pi i}\int_{0}^{2\pi}\frac{{\rm d}}{{\rm d}k}\ln\Big[\det[H(k)-E_{0}\mathbb{I}]\Big]{\rm d}k=\frac{1}{2\pi i}\int_{0}^{2\pi}\frac{{\rm d}}{{\rm d}k}\ln F(k) {\rm d}k
=\frac{1}{2\pi i}\oint_{\left | z \right | =1}\frac{{\rm d}}{{\rm d}z}\ln f(z) {\rm d}z,
\end{equation}
where $f(z)$ is the result of $F(k)$ after the substitutions: $\sin k=(z^2-1)/(2iz)$ and $\cos k=(z^2+1)/(2z)$. 

Now, let us verify that any $\cos^m (ak)$ can be expressed by a palindromic polynomial. 
For any integer power $m$, $(ma\le\alpha)$, $\cos^m (ak)$ can be expressed by binomial theorem:
\begin{align}
\cos^m (ak)&=\Big(\frac{z^{2a}+1}{2z^a}\Big) ^m=\frac{\sum_{j=0}^{m} \binom{m}{j} z^{2a(m-j)}  }{2^m z^{am}}= \frac{z^{2am}+mz^{2a(m-1)}+\dots+mz^{2a}+1}{2^m z^{am}}. 
\label{eq:cos}
\end{align}
The numerator in Eq.~\eqref{eq:cos} is a palindromic polynomial of $z$, whose order is $2am$. 

A similar process can be applied to $\sin (bk)$ with even orders, and one can prove $\sin^{2n} (bk)$ can be expressed by a palindromic polynomial in the same way.
Besides, the product of two palindromic polynomials would also be a palindromic polynomial, so the product of $\cos^m(ak)$ with any \textit {integer} power $m$ and  $\sin^{2n}(bk)$ with \textit{even} powers is also equivalent to the form $P_{2\gamma}(z)/z^\gamma$. However, exceptions occur when containing $\sin k$ with odd powers as $\sin^{2n+1}(bk)$:
\begin{align}
\sin^{2n+1}(bk)&=\Big(\frac{z^{2b}-1}{2iz^b}\Big) ^{2n+1}= \frac{\sum_{j=0}^{2n+1}\binom{2n+1}{j} z^{2b(2n+1-j)}(-1)^j}{2^{2n+1}i^{2n+1}z^{b(2n+1)}}\nonumber \\
&= \frac{z^{2b(2n+1)}-(2n+1)z^{2b(2n)}+\dots+(2n+1)z^{2b}-1}{2^{2n+1}i^{2n+1}z^{b(2n+1)}} 
\end{align}
It is obvious that the introduction of coefficients $(-1)^j$ makes it not satisfy the condition of a palindromic polynomial. 

Therefore, if the polynomial $F(k)\equiv \det[H(k)-E_{0}\mathbb{I}]$ does not contain any odd power of $\sin(k)$, i.e. $F(k)$ is an even function of $k$,  then $F(k)$ can be transformed into the form: $f(z)=P_{2\alpha}(z)/z^\alpha$, where $P_{2\alpha}(z)$ is a palindromic polynomial.
The $2\alpha$ order of the polynomial $P_{2\alpha}(z)$ can be easily read out because $\sin k=(z^2-1)/(2iz)$ and $\cos k=(z^2+1)/(2z)$, whose orders in numerators are twice those in denominators. $\alpha$ can be understood in this way: all terms $\cos^m(ak)\sin^n(bk)$ in the $F(k)$ would transform into the form $P_{2(am+bn)}(z)/z^{am+bn}$. After the reduction of fractions to a common denominator, we get $f(z)$ as a single fraction $P_{2\alpha}(z)/z^\alpha$. Therefore, $\alpha$ takes the maximal value of $(am+bn)$. 

\end{proof}

We then show that $P(z)$ in such a form in Lemma~\ref{le:lemma1} leads to a trivial winding number:

\begin{lemma}
If $P(z)$ is a complex palindromic polynomial with $\deg~P(z)=2\alpha$, where $\alpha\in \mathbb{Z^+} $, then 
\begin{equation}
I=\oint_{\left | z \right | =1}\frac{{\rm d}}{{\rm d}z}\ln \Big[\frac{P(z)}{z^\alpha}\Big]{\rm d}z=0.
\end{equation}
\label{le:lemma2}
\end{lemma}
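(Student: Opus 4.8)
The plan is to evaluate $I$ by the argument principle after separating the integrand into the part governed by the polynomial $P(z)$ and the part governed by the explicit factor $z^{\alpha}$. Writing $\ln[P(z)/z^\alpha]=\ln P(z)-\alpha\ln z$, I would handle the two pieces independently. The elementary piece contributes $\oint_{|z|=1}\tfrac{d}{dz}(-\alpha\ln z)\,dz=-\alpha\oint_{|z|=1}dz/z=-2\pi i\alpha$. For the other piece, $P$ is a polynomial and hence holomorphic with no poles, so the argument principle gives $\oint_{|z|=1}\tfrac{d}{dz}\ln P(z)\,dz=\oint_{|z|=1}P'(z)/P(z)\,dz=2\pi i\,Z$, where $Z$ counts the zeros of $P$ strictly inside the unit disk with multiplicity. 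Thus $I=2\pi i\,(Z-\alpha)$, and the whole statement reduces to establishing $Z=\alpha$.

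The key ingredient is the self-reciprocal structure of $P$. Since $P$ is palindromic of degree $2\alpha$, with coefficients satisfying $a_i=a_{2\alpha-i}$, a direct reindexing of the sum verifies the functional identity $P(z)=z^{2\alpha}P(1/z)$. This identity shows that the multiset of roots is invariant under the involution $z_0\mapsto 1/z_0$: if $P(z_0)=0$ with $z_0\neq 0$, then $P(1/z_0)=z_0^{-2\alpha}P(z_0)=0$, and the same identity forces equal multiplicities at $z_0$ and $1/z_0$. Moreover $P(0)=a_0=t\neq 0$, so the origin is not a root and all $2\alpha$ roots are nonzero; the only fixed points of the involution, $z=\pm 1$, lie on $|z|=1$.

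The main obstacle — and the point where the hypotheses must be invoked with care — is excluding roots on the unit circle. For the winding number in Eq.~\eqref{eq:WN} to be defined at all, $E_0$ must lie off the PBC spectrum, i.e. $F(k)=\det[H(k)-E_0\mathbb{I}]\neq 0$ for real $k$, which is exactly $P(z)\neq 0$ on $|z|=1$; I would adopt this as the standing assumption. Granting it, the involution $z_0\mapsto 1/z_0$ has no fixed points among the roots and pairs each interior root ($|z_0|<1$) with a unique exterior root ($|1/z_0|>1$), respecting multiplicities. The $2\alpha$ roots therefore partition into exactly $\alpha$ reciprocal pairs, each contributing one interior root, so $Z=\alpha$. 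Substituting back gives $I=2\pi i(\alpha-\alpha)=0$, which is equivalent to the vanishing winding number $w_{\text{BZ},E_0}=I/(2\pi i)=0$, completing the argument.
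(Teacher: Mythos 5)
Your core argument is the same as the paper's: both proofs rest on the self-reciprocal structure of the palindromic polynomial, pairing each root $z_0$ with $1/z_0$ at equal multiplicity, and then counting zeros and poles of the logarithmic derivative inside $|z|=1$. Splitting $\ln[P(z)/z^\alpha]$ into $\ln P(z)-\alpha\ln z$ versus treating $z=0$ as an order-$\alpha$ pole of $P(z)/z^\alpha$ is a cosmetic difference.

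The substantive discrepancy is your standing assumption that $P(z)\neq 0$ on $|z|=1$. The lemma as stated carries no such hypothesis, and the paper's proof deliberately does \emph{not} make it: it allows $j$ reciprocal pairs of zeros to sit on the unit circle and assigns each on-circle zero a half residue ($\pi i$ per simple zero, in the principal-value sense), so that the count becomes $2\pi i(\alpha-j)+\pi i\cdot 2j-2\pi i\alpha=0$. The cancellation survives precisely because on-circle zeros of a palindromic polynomial still come in reciprocal pairs (each pair contributing $2\cdot\pi i=2\pi i$, the same as one interior zero). This is not a technicality the authors could discard: in Appendix E they stress that for the higher-dimensional (GDSE) application one must evaluate the winding number at reference energies $E_0$ lying \emph{on} the PBC spectrum of the quasi-1D slice, i.e.\ exactly the case $\det[H(k)-E_0\mathbb{I}]=0$ for some real $k$, which is $P(z)=0$ on $|z|=1$. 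Your justification for excluding this case --- that the winding number is otherwise undefined --- is the point the paper's proof is designed to overcome by regularizing the integral. So your proposal establishes the lemma only on the complement of the spectrum; to recover the full statement you would need to either adopt the principal-value convention and add the half-residue contribution of on-circle zeros (observing that they too pair up under $z_0\mapsto 1/z_0$), or deform the contour symmetrically around each such pair and check that the indentations cancel.
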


\begin{proof}
First, let us review the residue of the logarithmic derivative ${\rm d}\ln[f(z)]/{\rm d}z=f^{\prime}(z)/f(z)$ at the point $z=a$. If $a$ is a zero of $f(z)$ with multiplicity $n$, and if $b$ is a pole of $f(z)$ with order $m$, then the residues of $f^{\prime}(z)/f(z)$ at points $a$ and $b$ are $n$ and $-m$, respectively, where $a$ and $b$ must be the first order poles of $f^{\prime}(z)/f(z)$~\cite{Lav_complecbook}. By definition of a palindromic polynomial $P(z)$ with $\deg P(z)=2\alpha$, we can write $P(z)$ in following general form: $P(z)=t(z^{2\alpha} +k_1z^{2\alpha-1}+\cdots +k_{\alpha-1}z^{\alpha+1}+k_\alpha z^{\alpha}+k_{\alpha-1}z^{\alpha-1}+\dots+k_1z+1)$.
It is clear that $P(z)/z^\alpha$ has only one pole at $z=0$ with order $\alpha$, so $\Res\ln[P(z)/z^\alpha]|_{z=0}=-\alpha$.

Besides, we need to discuss the residue at zeros of $P(z)/z^\alpha$. It is obvious that $z=0$ is not the solution of $P(z)=0$. Hence, we can rewrite the polynomial $P(z)$ in a product form: $t(z-\beta_1)(z-\beta_2)\dots(z-\beta_{2\alpha})$, ($\beta_i\ne 0,i=1,2,\cdots,2\alpha$), in which $\beta_i$ is a zero of $P(z)$.
According to the property of real palindromic polynomial $f(x)$, if $x_0\ne 0$ is one zero of $f(x)$, then $1/x_0$ is also a zero of $f(x)$ with the same multiplicity~\cite{Sun2015}. This property can be generalized to a complex palindromic polynomial $P(z)$. So these $2\alpha$ zeros can be written in pairs: ${(\beta_1,\beta_2), (\beta_3, \beta_4), \dots, (\beta_{2\alpha-1}, \beta_{2\alpha})}$, each pair satisfies $\beta_{2i-1}\beta_{2i}=1, (i=1, \dots, \alpha)$. 
Therefore, the distribution of these zeros on the complex plane can be divided into two groups: a pair of zeros on the circle $|z|=1$ at the same time ($|\beta_{2i-1}|=|\beta_{2i}|=1$), one zero inside the circle while the other outside the circle ($|\beta_{2i-1}|<1, |\beta_{2i}|>1$).

\begin{figure}[ht]
\centering
\includegraphics[width=7cm]{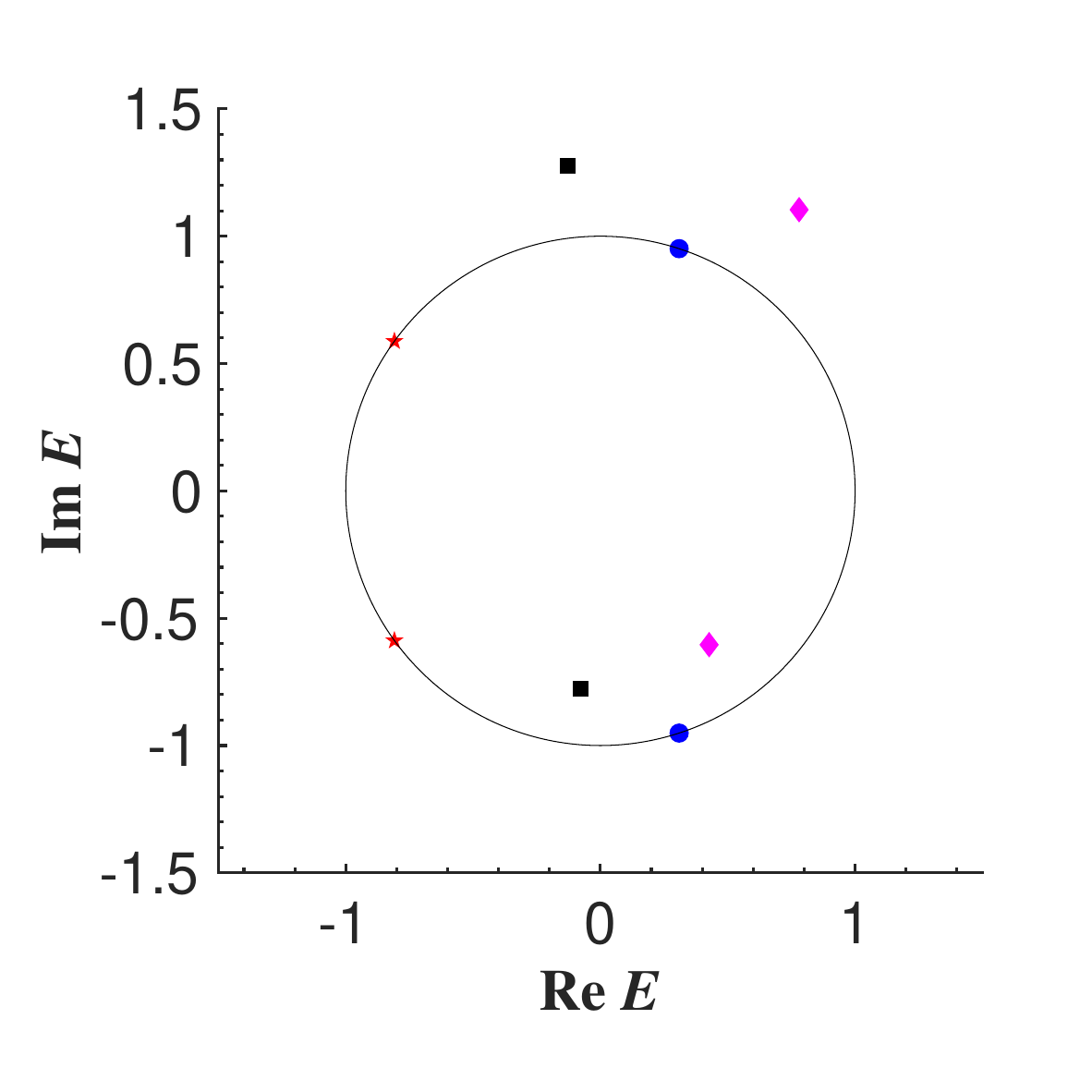}
\caption{An illustration of the distribution of zeros of palindromic polynomial $(1+i)z^8+(1-i)z^7+(2+i)z^6+(2-i)z^5+3z^4+(2-i)z^3+(2+i)z^2+(1-i)z+(1+i)$ as an example. Each pair of points with the same color and shape corresponds to a pair of zeros $(\beta_i,\beta_{i+1})$.}
\label{fig:fig7}
\end{figure}

We can calculate integral $I=\oint_{|z|=1}{\rm d}/{\rm d}z(\ln[P(z)/z^\alpha]){\rm d}z$ by the Residue theorem. Let us assume there are $j$ pairs of zeros on the circle $|z|=1$, where $0\le j\le \alpha$. Then there are $\alpha-j$ zeros inside the circle. Here we count all zeros and poles weighted by respective orders. Then the integral can be evaluated by,
\be
I=\oint_{|z|=1}\frac{{\rm d}}{{\rm d}z}\ln\Big[\frac{P(z)}{z^\alpha}\Big]{\rm d}z=2\pi i (\alpha-j) +\pi i \times 2j -2\pi i \alpha=0. 
\ee
These three terms above describe the contributions from the zeros of $P(z)/z^{\alpha}$ inside the circle, the contribution from the zeros on the circle,  and the contribution from the poles inside the circle, respectively. 
\end{proof}

From Lemma~\ref{le:lemma1} and Lemma~\ref{le:lemma2}, Theorem~\ref{thm.theorem} in the main text is proved.

\subsection{A simplified proof of Theorem~\ref{thm.theorem}}
The above process suggests the way to calculate the winding number by directly studying the algebraic structure of a Hamiltonian. Here we also provide a simplified version of the proof.
By applying the transformation $z=e^{ik}$, and taking the even function condition of $F(k)$ into consideration, the function $F(k)={\sum_{l,m,n,a,b}} c_{l}\cos^m(ak)\sin^n(bk)$ is transformed into $f(z)$,
\be
f(z)={\sum_{l,m,p,a,b}} c_{l}\Big(\frac{z^{2a}+1}{2z^a}\Big)^m\Big(\frac{z^{2b}-1}{2iz^b}\Big)^{2p} 
={\sum_{l,m,p,a,b}} c_{l}\Big(\frac{1+z^{2a}}{2z^a}\Big)^m\Big(\frac{1-z^{2b}}{2iz^b}\Big)^{2p} =f(1/z),
\label{eq:zeros}
\ee
where $n$ is a non-negative even number, $n=2p~(p\in\mathbb{N})$. 

Equation~\eqref{eq:zeros} implies that if $z_1$ is a zero of $f(z)$, then $z_2=1/z_1$ is also a zero of $f(z)$. In other words, the zeros of $f(z)$ always come in pairs $(z_1,z_2)$, satisfying $z_1z_2=1$.
Then the above discussion about the distribution of zeros and the residue of the logarithmic derivative can be applied to $f(z)$ again, and we have
\be
I=\oint_{|z|=1}\frac{{\rm d}}{{\rm d}z}\ln f(z){\rm d}z=0. 
\ee

\subsection{Applications of Theorem~\ref{thm.theorem} to two-band models}

In this section, we present two examples as applications of Theorem~\ref{thm.theorem} to two-band models for the trivial winding numbers.

\subsubsection{A non-Hermitian Kitaev chain Hamiltonian}
Here, let us consider a non-Hermitian Kitaev chain Hamiltonian by introducing the complex next-nearest neighbour hopping terms in real space:
\be
H=\mu c_m^\dagger c_m + \Big(t_1c_{m+1}^\dagger c_m+\Delta c_{m+1}^\dagger c_m^\dagger +h.c.\Big)+(t_2+i\gamma) \Big(c_{m+2}^\dagger c_m+c_m^\dagger c_{m+2}\Big),
\ee
where $\mu, t_1, t_2, \Delta, \gamma \in \mathbb{R}$ are the chemical potential, two hopping amplitudes, the $p$-wave pairing gap, and the degree of non-Hermiticity, respectively. In momentum space, the Hamiltonian matrix in the Pauli matrices basis is given by
\be
H(k)=\Big(\frac{\mu}{2}+t_1\cos k +(t_2+i\gamma)\cos (2k)\Big)\sigma_z-\big(\Delta\sin k\big)\sigma_y.
\ee
One can notice the characteristic equation of this Hamiltonian is $\det [H(k)-E]=E^2-\Big(\mu/2+t_1 \cos k+(t_2+i\gamma)\cos(2k)\Big)^2-\Delta^2\sin^2(k)=0$, which can be also viewed as a characteristic equation for a single-band model and base energy $E^2$, and the solutions of this single-band characteristic equation will give the eigenvalue $E^2$, i.e. the eigenvalue pairs $(E,-E)$. Considering the two-band Hamiltonian belongs to class D, due to the particle-hole symmetry $PH^T(k)P^{-1}=-H(-k)$, where $P=\sigma_x$~\cite{Kawabata2019PRX}, the eigenvalues must appear as the form of pairs $(E,-E)$. Hence, we develop the equivalent characteristic equation, i.e. the same GBZ conditions between the two-band Hamiltonian and one single-band Hamiltonian, and the winding number here can be used to characterize the information of the skin effect~\cite{Fang2020PRL}. According to Corollary~\ref{corollary}, for any reference energy $E_0\in\mathbb{C}$, $w(E_0)=0$, because $a_0(k)=0$ and $\det [H(k)]=-\Big(\mu/2+t_1 \cos k+(t_2+i\gamma)\cos(2k)\Big)^2-\Delta^2\sin^2(k)$ are both even functions of $k$, implying the absence of skin effect. We present the numerical results in Fig.~\ref{fig:fig8}, where the energy spectra with OBC and PBC overlap with each other, except for the robust Majorana zero modes with OBC in the presence of a line gap, and we confirm that there is no NHSE numerically.

\subsubsection{The non-Hermitian chiral $p+ip$ superfluid}
In fact, the similar process could be applied to the Hamiltonian~Eq.\eqref{eq:BlochH}. For researching the skin effect in a cylinder geometry with $x$-direction open boundary, we can fix one $k_y=k_{y0}$, and calculate the winding number for this PBC Hamiltonian $H(k_x, k_{y0})$. Since eigenvalues of this quasi-one-dimensional two-band Hamiltonian always appear as pairs $(E,-E)$,we can find an equivalent single-band characteristic equation for this model, and further, the winding number must characterize the skin effect. One can also check $\det [H(k)]$ is an even function of $k_y$ or $k_x$, for a fixed $k_{x0}$ or $k_{y0}$, and we get zero winding number in the specific open $y$- or $x$-boundary cylinder geometry. Therefore, the discussion based on the winding number in the main text is valid.

\begin{figure}[ht]
\centering
\includegraphics[width=10cm]{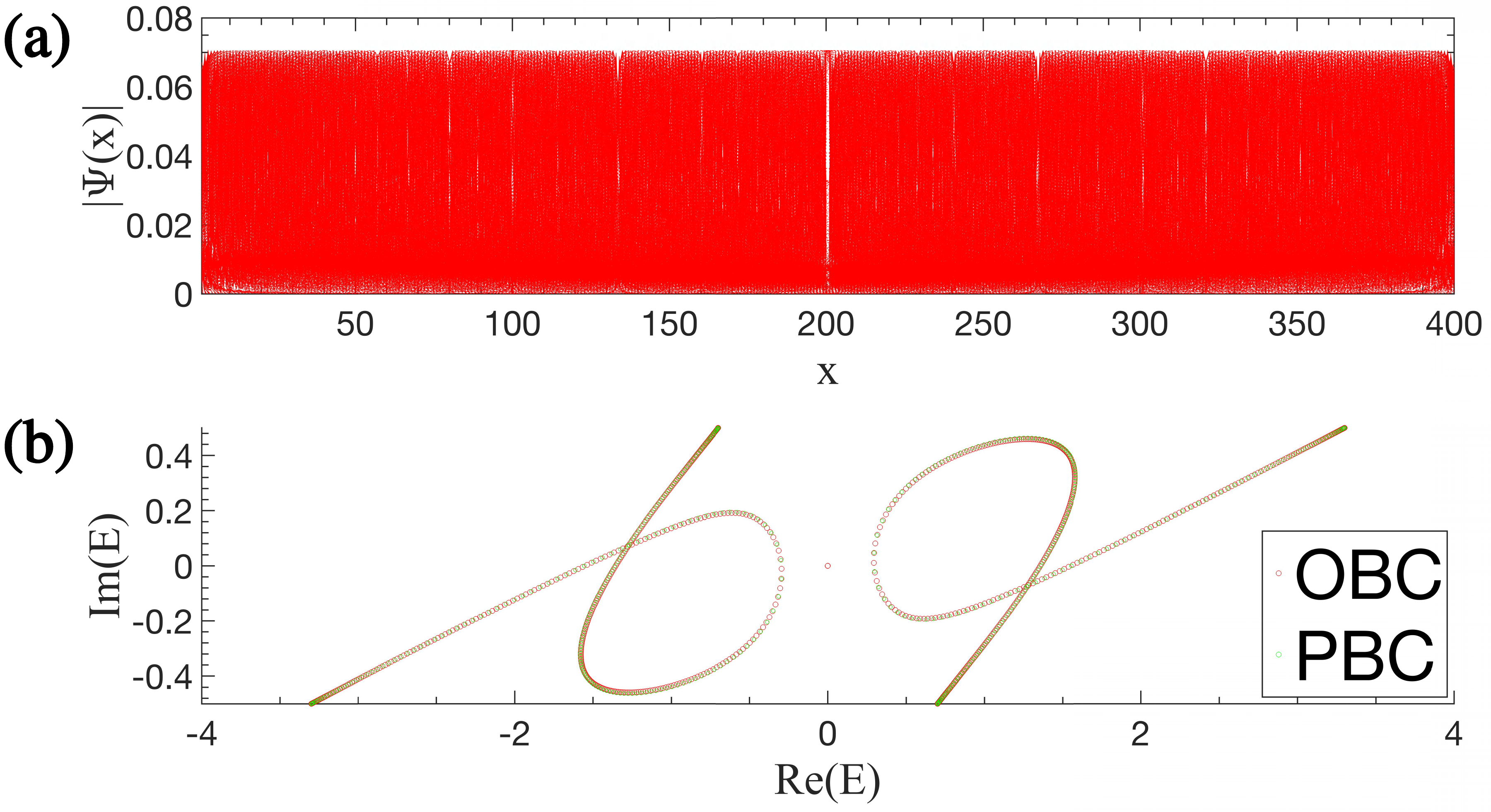}
\caption{(a) The distribution of wavefunctions of all bulk modes.  (b)The energy spectra with PBC and OBC. Parameters: $\mu=0.2$, $t_1=2$, $t_2=1.2$, $\gamma=0.5$, $\Delta=0.5$, and systems size $N=400$.}
\label{fig:fig8}
\end{figure}

\subsection{Application of Theorem~\ref{thm.theorem} to higher-dimensional systems}
Finally, we discuss the significance of Theorem~\ref{thm.theorem} for models in dimensions higher than one. We start from the 1D winding number:
\begin{equation}
w_{\text{BZ},E_0}=\frac{1}{2\pi i}\int_{-\pi}^{\pi}{\rm d}k\frac{\rm{d}}{{\rm d}k}\ln[\det(H(k)-E_{0}\mathbb{I})].
\label{eq:integral}
\end{equation}
When $F(k)=\det[H(k)-E_{0}\mathbb{I}]$ is an even function about $k$, we then have ${\rm d}\ln[\det(H(k)-E_{0}\mathbb{I})]/{\rm d}k={F}'(k)/F(k)$, which is obviously an odd function. We can conclude $w_{\text{BZ},E_0}=0$, because $k \in [-\pi,\pi]$. However, we should notice that the above conclusion is only valid when $F^{\prime}(k)/F(k)$ is well defined. This requirement is naturally satisfied when we consider 1D system, because the reference energy $E_0$ usually does not belong to the energy spectrum of $H(k)$, or $E_0$ is not an eigenvalue of $H(k)$. In other words, $F(k)\ne 0$. However, when we consider systems in higher dimensions, spectra with nonzero area may appear, and we need to determine whether GDSE exists in such a case~\cite{Fang2022NatComm}. Then we need to calculate the (quasi-) 1D spectral winding number for each straight line and all momenta in the entire BZ, which means some corresponding reference energies may be in the mapping energy spectrum of the straight line, and we have to calculate the winding number about some $E_0$ with $\det[H(k)-E_{0}\mathbb{I}]=0$. Hence, we need to cope with the possible divergency for some $k$ in integral~Eq.\eqref{eq:integral} now. Theorem~\ref{thm.theorem} gives a ``no-go'' theorem, in which the singularity has been taken into account during the derivation. Let us recall the possible distributions of roots of $P(z)$, those pairs of roots on the unit circle exactly correspond to the solutions of $\det[H(k)-E_{0}\mathbb{I}]=0$, which means there exist $k\in [0, 2\pi]$ such that $\det[H(k)-E_{0}\mathbb{I}]=0$. Hence, Theorem~\ref{thm.theorem} allows us to include the singularities and calculate the winding number in higher-dimensional systems.

We also need to emphasize that Theorem~\ref{thm.theorem} and Corollary~\ref{corollary} provided in the main text are only ready for obtaining the trivial winding number, and they may not guarantee the absence of the skin effect. For example, although we have ensured the universal trivial winding number for systems with reflection symmetry, the skin effect may also appear for some models, like the model of class AI~\cite{Liu2019}, where a different topological invariant, $\mathbb{Z}_2$ invariant, characterizes topological properties. Hence, although Theorem~\ref{thm.theorem} and Corollary~\ref{corollary} give us convenient way to obtain a zero winding number, when one aims to research the skin effect, they are only applicable to the case where the winding number can be used to characterize the skin effect exactly, such as the one-dimensional one-band model.

\end{widetext}


%

\end{document}